\title{Solving Multi-choice Secretary Problem in Parallel: An Optimal Observation-Selection Protocol\thanks{The work is partially supported by National Natural Science Foundation of China (61170062, 61222202, 61433014, 61173009).}}
\author{Xiaoming Sun\and Jia Zhang\and Jialin Zhang}
\institute{Institute of Computing Technology,
           Chinese Academy of Sciences, Beijing, China}
\date{}
\def\gto{\gamma_{2,1}}
\def\gtt{\gamma_{2,2}}
\def\got{\gamma_{1,2}}
\def\goo{\gamma_{1,1}}
\def\ittt{\iqjk{2}{2}{2}}
\def\itto{\iqjk{2}{2}{1}}
\def\itot{\iqjk{2}{1}{2}}
\def\itoo{\iqjk{2}{1}{1}}
\def\iooo{\iqjk{1}{1}{1}}
\newcommand{\iqjk}[3]{i_{#1,#2,#3}}
\newcommand{\yqijk}[4]{y_{#1,#2}^{#3|#4}}
\newcommand{\ttoi}[1]{\floor{\frac{#1-1}{Q}}+1}
\newcommand{\swjkq}{shared $(Q,J,K)$ problem}
\newcommand{\ewjkq}{exclusive $(Q,J,K)$ problem}
\newcommand{\fswjkq}{shared $Q$-queue $J$-choice $K$-best secretary problem}
\newcommand{\fewjkq}{exclusive $Q$-queue $J$-choice $K$-best secretary problem}
\newcommand{\floor}[1]{\left\lfloor #1\right\rfloor}
\newtheorem{defi}{Definition}
\newtheorem{lemm}{Lemma}
\newtheorem{prop}{Proposition}
\newtheorem{theo}{Theorem}
\newtheorem{fact}{Fact}
\algrenewcommand{\Require}{\State \textbf{Input: }}
\algrenewcommand{\Ensure}{\State \textbf{Output: }}
\begin{document}
\maketitle
\bibliographystyle{plain}

\vspace{-0.2cm}
\begin{abstract}
	The classical secretary problem investigates the question of how to hire the best secretary from $n$ candidates who come in a uniformly random order.
	In this work we investigate a parallel generalizations of this problem introduced by Feldman and Tennenholtz \cite{Feldman:2012}. We call it \emph{\fswjkq}. In this problem, $n$ candidates are evenly distributed into $Q$ queues, and instead of hiring the best one, the employer wants to hire $J$ candidates among the best $K$ persons. The $J$ quotas are \emph{shared} by all queues. This problem is a generalized version of $J$-choice $K$-best problem which has been extensively studied and it has more practical value as it characterizes the parallel situation.
	
	Although a few of works have been done about this generalization, to the best of our knowledge, no optimal deterministic protocol was known with general $Q$ queues. In this paper, we provide an optimal deterministic protocol for this problem. The protocol is in the same style of the $1\over e$-solution for the classical secretary problem, but with multiple phases and adaptive criteria. Our protocol is very simple and efficient, and we show that several generalizations, such as the \emph{fractional $J$-choice $K$-best secretary problem} and \emph{\fewjkq}, can be solved optimally by this protocol with slight modification and the latter one solves an open problem of Feldman and Tennenholtz \cite{Feldman:2012}.
	In addition, we provide theoretical analysis for two typical cases, including the $1$-queue $1$-choice $K$-best problem and the shared $2$-queue $2$-choice $2$-best problem. For the former, we prove a lower bound $1-O(\frac{\ln^2K}{K^2})$ of the competitive ratio. For the latter, we show the optimal competitive ratio is $\approx0.372$ while previously the best known result is 0.356 \cite{Feldman:2012}.
	\vspace{-0.3cm}
\end{abstract}
\thispagestyle{plain}

\section{Introduction}
\vspace{-0.2cm}
The classical \emph{secretary problem} considers the situation that an employer wants to hire the best secretary from $n$ candidates that come one by one in a uniformly random order~\cite{Gardner:1960}. Immediately after interviewing a candidate, the employer has to make an irrevocable decision of whether accepting this candidate or not. The goal of the employer is to maximize the probability of hiring the best one among these candidates.
It is well known that the optimal solution is in a phase style: the employer firstly interviews $n/e$ candidates without selecting anyone, then, he/she chooses the first candidate who is better than all previous ones. This protocol hires the best candidate with probability $1/e$ and it is optimal~\cite{Dynkin:1963,Lindley:1961}. This problem captures many scenarios 

For example, the employer may hire the candidate before a more suitable interviewee arrives, the seller may sell the item without knowing the future buyer who offers higher price, the company may accept an order without the knowledge that the future task is more urgent.
This problem has been studied extensively in many fields, such as decision theory~\cite{Lindley:1961}, game theory~\cite{Babaioff:2008,Hajiaghayi:2004,Kleinberg:2005} and theory of computation~\cite{Borosan:2009,Freeman:1983}, etc. 

The classical secretary problem has many generalizations. A natural kind of generalizations is to relax the requirement that only selecting the best one and instead, allow the employer hiring multiple candidates.
	Kleinberg~\cite{Kleinberg:2005} considered that the employer selects multiple candidates with the objective to maximize the expectation of the total {\em values} of selected persons, and he proposed the first protocol whose expected competitive ratio tends to $1$ when the number of choices goes to infinity.
	Buchbinder et al.~\cite{Buchbinder:2010} revealed an important relationship between the secretary problem and linear programming, which turns out to be a powerful method to construct optimal (randomized) protocols for many variants of secretary problems. Those variants include the so called $J$-choice $K$-best problem that the employer wants to hire $J$ candidates from the best $K$ candidates of all. For the $J$-choice $K$-best problem, they construct a randomized optimal algorithm based on the optimal solution of corresponding linear program.
	Another important variant is proposed by Feldman et al.~\cite{Feldman:2012}.
	They were the first to introduce the parallel model. In their work, the candidates are divided into several queues to be interviewed by different interviewers. They studied two interesting settings: the quotas are pre-allocated and the quotas are shared by all interviewers. For these settings, they designed algorithms and analyzed the competitive ratios based on the random time arrival model \cite{Feldman:2011}. 
	Chan et al. \cite{Chan:2013} combined the results of Buchbinder et al. \cite{Buchbinder:2010} with the random time arrival model \cite{Feldman:2011} and considered infinite candidates.
Under their model, they constructed a $(J,K)$-threshold algorithm for $J$-choice $K$-best problem. They also showed that their infinite model can be used to capture the asymptotic behavior of the finite model.


In this work, we focus on the shared parallel model introduced by Feldman et al. \cite{Feldman:2012}. All the algorithms and analysis are based on the classical discrete and finite model. The parallel model can characterize many important situations where resource is limited or low latency is required.
A typical case is the emergency diagnosis in hospital. To shorten the waiting time, patients are diagnosed by ordinary doctors in parallel. 
The serious patients are selected to be diagnosed by the expert doctors, since the experts are not enough and they can only deal with limited number of patients.

Our main result is an optimal deterministic protocol, which we call \emph{Adaptive Observation-Selection Protocol}, for the \textit{\fswjkq} (abbreviated as \emph{\swjkq}). In this problem, $n$ candidates are assigned to $Q$ queues and interviewed in parallel. All queues \emph{share} the $J$ quotas. Besides, there is a set of weights $\{w_k\,|\,1\leq k\leq K\}$ where $w_k$ stands for how important the $k$-th rank is. The employer wants to maximize the expectation of the summation of the weight associated with the selected secretaries.
To design an optimal protocol, we generalize the linear program technique introduced by Buchbinder et al. \cite{Buchbinder:2010}. 
To design an optimal protocol, we generalize the linear program technique introduced by Buchbinder et al. \cite{Buchbinder:2010}. 
Based on the optimal solution of LP model, one can design a randomized optimal algorithm. However, it is time consuming to solve the LP (the LP has $nJK$ variables) and the randomized algorithm is unpractical to apply. Besides, although this LP model has been adopted in many work, its structure hasn't been well studied in general. With digging into its structure, we develop a nearly linear time algorithm to solve the LP within $O(nJK^2)$ time. More importantly, our protocol is deterministic. It is also simple and efficient. After 
We show that this is not the case by providing a simple deterministic counterpart for \swjkq. The key observation we use is that, besides the close relationship between the protocol of secretary problem and the feasible solution of linear program, the structure of the optimal solution reveals the essences of such problem, and actually points out the way to design a clean and simple deterministic protocol.
Our protocol can be extended to solve other extensions, as their LP models have the similar structure essentially. Among those extensions, the optimal protocol for \emph{\fewjkq} addresses an open problem in the work of Feldman et al. \cite{Feldman:2012}.

Our protocol is a nature extension of the well known $1/e$-protocol of the classical problem.
In the $1/e$-protocol, the employer can treat the first $n/e$ candidates as an {\em observation phase} and set the best candidate in this phase to be a criteria. In the second phase, the employer makes decision based on this criteria. In our problem, it is natural to extend the above idea to multiple phases in each queue and the criteria may change in different phases.
Actually, the similar intuition has been used in many previous works, not only the secretary problem \cite{Babaioff:2007,Feldman:2012}, but also some other online problems such as online auction \cite{Hajiaghayi:2004} and online matching \cite{Kesselheim:2013}. This intuition seems straightforward, but it is hard to explain why it works.
In this work, we theoretically prove that this intuition indicates the right way and can lead to optimality in our case.

 	Another contribution is that we provide theoretical analysis for the competitive ratio of non-weighted cases of our problem.
	For the \emph{ $(1,1,K)$ case}, we provide a lower bound $1-O\left(\frac{\ln^2K}{K^2}\right)$ and some numerical results. For the \emph{shared $(2,2,2)$ case}, we show that the optimal competitive ratio is approximately 0.372 which is better than 0.356 that obtained by Feldman et al. \cite{Feldman:2012}.




\vspace{0.15cm}
\noindent{\bf More Related Work} ~Besides those results mentioned above, there are lots of works that are closely related to this one.
Ajtai et al. \cite{Ajtai:2001} have considered the $K$-best problem with the goal to  minimize the expectation of the sum of the ranks (or powers of ranks) of the accepted objects.
In the \textit{Matroid secretary problem}~\cite{Babaioff:2007:Mar,Chakraborty:2012,Dimitrov:2008,Dinitz:2013,Gharan:2013,Jaillet:2013,Im:2011,Korula:2009,Soto:2013}, it introduces some combinatorial restrictions (called matroid restriction) to limit the possible set of selected secretaries. Another kind of combinatorial restriction is the knapsack constraints~\cite{Babaioff:2007,Babaioff:2008}. They combined the online knapsack problem and the idea of random order in secretary problem. Another branch of works consider the value of selected secretaries. It is no longer the summation of values of each selected one, but will be a submodular function among them~\cite{Bateni:2010,Feldman:2011,Gupta:2010}. Besides, Feldman et al. \cite{Feldman:2011}
considered the secretary problem from another interesting view. They assumed all of the candidates come to the interview at a random time instead of a random order. Some works talked about the case that only partial order between candidates are known for the employer~\cite{Georgiou:2008,Kumar:2011}.
There are also some works considering the secretary problem from the view of online auction~\cite{Babaioff:2007,Babaioff:2008,Babaioff:2007:Mar,Hajiaghayi:2004,Kesselheim:2013,Kleinberg:2005,Koutsoupias:2013,Lavi:2000}. In these works, one seller wants to sell one of more identical items to $n$ buyers, and the buyers will come to the market at different time and may leave after sometime.
The goal of the seller is to maximize his/her expected revenue as well as the concern of truthfulness.

\section{Preliminaries \label{sec:name:pre}}
In this section we formally define the \swjkq.
Given positive integers $Q,\,J,\,K$ and $n$ with $Q,\,J,\,K\leq n$, suppose the employers want to hire $J$ secretaries from $n$ candidates that come one by one in a uniformly random order. There are $Q$ interviewers. Due to practical reason, like time limitation, they do the interview in parallel.
All candidates are divided into $Q$ queues, that is, the $i$-th person is assigned to the queue numbered $i\!\!\!\mod Q$ ($i=1,\ldots,n$). The employers then interview those candidates simultaneously. All the $J$ quotas are shared by the $Q$ queues. That means in each queue, the employers can hire a candidate if the total number of hired persons is less than $J$. The only information shared among $Q$ queues is the number of the candidates already hired. Thus the employer in each queue only knows the relative order about those candidates already interviewed in his/her own queue but has no idea about those unseen ones and persons in other queues. After interviewing each candidate, the employer should make an irrevocable decision about whether employ this
candidate or not. For the sake of fairness, we make a reasonable assumption that the duration of the interviewing for each candidate is uniform and fixed. This ensures the interview in each queue is carried out in the same pace.
When employers in several queues want to hire the candidate in their own queues at the same time, to break the tie, the queues with smaller number have higher priority.
Besides, we suppose the employers only value the best $K$ candidates and assign different weights to  every one of the $K$ candidates and those weights satisfies $w_1\geq w_2\geq \cdots\geq w_K > 0$ where the $w_k$ stands for the importance of the $k$-th best candidate in the employer's view. Candidates not in best $K$ can be considered have a weight 0. The object function is to maximize the expectation of the summation of the weight of selected candidates. This is the so called \emph{\fswjkq}, and we abbreviate it as \emph{\swjkq} for convenience.

\vspace{-0.2cm}
\section{Optimal Protocol for Shared $(Q,J,K)$ Problem \label{sec:name:qjk}}
\vspace{-0.1cm}
In this section, we first characterize the \swjkq\ by a linear program and then construct a deterministic protocol for the \swjkq. We will talk about the relationship between the linear program and our protocol, and finally use the idea of primal and dual to show our protocol is optimal. 
\vspace{-0.2cm}
\subsection{Linear Program for the Shared $(Q,J,K)$ Secretary Problem\label{sec:name:lpmodel}}
\vspace{-0.1cm}
We use a linear program to characterize the \swjkq\  and provide its dual program. This approach was introduced by Buchbinder et al. \cite{Buchbinder:2010} to model the $J$-choice $K$-best problem. We are the first to generalize it to the \swjkq.

\vspace{0.1cm}
\noindent{\bf Primal Program for the Shared $(Q,J,K)$ Problem}\ 
Without loss of generality, we assume $n$ is a multiple of $Q$. Let $c_{q,i}$ stand for the $i$-th candidate in $q$-th queue and $x_{q,i}^{j|k}$ stand for the probability that $c_{q,i}$ is selected as the $j$-th one given that he/she is the $k$-th best person up to now in $q$-th queue. When the $J,\,K$ and the weights are given, we know the offline optimal solution is $\sum_{l=1}^{\min(J,K)}w_l$. We denote it as $W$. Then we can model the \swjkq\ as follow. 
{\small
\begin{equation}
\label{sec3:equ:lpprimal}
    \begin{split}
		&\qquad\max z=\frac{1}{nW} \sum_{q=1}^Q\sum_{j=1}^J\sum_{l=1}^K\sum_{i=1}^n\sum_{k=l}^Kw_k \frac{\binom{i-1}{l-1}\binom{n-i}{k-l}}{\binom{n-1}{k-1}}x_{q,i}^{j|k}\\
        &\text{s.t.}\left\{
        \begin{aligned}
			&x_{q,i}^{j|k}\leq \sum_{m=1}^Q\sum_{s=1}^{i-1}\frac{1}{s}\sum_{l=1}^K\left(x_{m,s}^{j-1|l}-x_{m,s}^{j|l}\right)+\sum_{m=1}^{q-1}\frac{1}{i}\sum_{l=1}^K\left(x_{m,i}^{j-1|l}-x_{m,i}^{j|l}\right),& \\
			&(1\leq q\leq Q,\,1\leq i\leq n/Q,\,1\leq k\leq K,\,1\leq j\leq J)& \\
   &x_{q,i}^{j|k}\geq 0, ~~ (1\leq q\leq Q,\,1\leq i\leq n/Q,\,1\leq k\leq K,\,1\leq j\leq J). &
        \end{aligned}
        \right.\\
    \end{split}
\end{equation}
}

We briefly explain this program. As we can see, $c_{q,i}$ will be selected in $j$-th round only if there are exact $j-1$ candidates are selected \emph{before} $c_{q,i}$. Consequently, according to the definition of $x_{q,i}^{j|k}$, it is clear that $x_{q,i}^{j|k}$ must be less than the probability that $j-1$ candidates are selected. Thus we have the following inequality.

\begin{equation*}
	\begin{split}
		x_{q,i}^{j|k} & \leq  \Pr(\text{there are at least $j-1$ candidates are selected before $c_{q,i}$})\\
					  & \qquad-\Pr(\text{there are at least $j$ candidates are selected before $c_{q,i}$})\\
				   & = \sum_{s = 1}^{i-1}\sum_{m = 1}^{Q}\left(\Pr(\text{$c_{m,s}$ is selected in $(j-1)$-th round})-\Pr(\text{$c_{m,s}$ is selected in $j$-th round})\right)\\
				   &\qquad + \sum_{m = 1}^{q-1}\left(\Pr(\text{$c_{m,i}$ is selected in $(j-1)$-th round})-\Pr(\text{$c_{m,i}$ is selected in $j$-th round})\right)\\
				   &= \sum_{m=1}^Q\sum_{s=1}^{i-1}\frac{1}{s}\sum_{l=1}^K\left(x_{m,s}^{j-1|l}-x_{m,s}^{j|l}\right)+\sum_{m=1}^{q-1}\frac{1}{i}\sum_{l=1}^K\left(x_{m,i}^{j-1|l}-x_{m,i}^{j|l}\right)\\
	\end{split}
\end{equation*}

Note that when $j=1$, the constraint actually is $$x_{q,i}^{1|k}\leq 1-\sum_{m=1}^Q\sum_{s=1}^{i-1}\frac{1}{s}\sum_{l=1}^Kx_{m,s}^{1|l}-\sum_{m=1}^{q-1}\frac{1}{i}\sum_{l=1}^Kx_{m,i}^{1|l}.$$ However, for the convenience of description, we add a set of dummy variables $x_{q,i}^{0|k}$, and set $x_{1,1}^{0|1} = 1$ while others to be 0. This makes $$\sum_{m=1}^Q\sum_{s=1}^{i-1}\frac{1}{s}\sum_{l=1}^Kx_{m,s}^{0|l}+\sum_{m=1}^{q-1}\frac{1}{i}\sum_{l=1}^Kx_{m,i}^{j|l} = 1,$$ so that the LP has a uniform constraint for all $1\leq j\leq J$. 
 
Consider the object function. 
Let $X$ stand for the random variable of the summation of weights of the selected candidates. Then, we have
\begin{equation*}
	\begin{split}
		E(X) &= \sum_{q = 1}^Q\sum_{i=1}^{n/Q}\sum_{j=1}^J\sum_{l=1}^K x_{q,i}^{j|l}\sum_{k = l}^K \Pr(\text{$c_{q,i}$ is $k$-th best candidate})\cdot w_k\\
		&= \sum_{q = 1}^Q\sum_{i=1}^{n/Q}\sum_{j=1}^J\sum_{l=1}^K x_{q,i}^{j|l}\sum_{k = l}^K \frac{\binom{i-1}{l-1}\binom{n-i}{k-l}}{n\binom{n-1}{k-1}}w_k.\\
	\end{split}
\end{equation*}
Thus, the competitive ratio is $\frac{E(X)}{W}$. It is just our objective function.

For further analysis, we provide several definitions about the primal program.

\begin{defi}[Crucial Constraint]
    We call the constraint 
	{\small
    $$x_{q,i}^{j|k}\leq \sum_{m=1}^Q\sum_{s=1}^{i-1}\frac{1}{s}\sum_{l=1}^K\left(x_{m,s}^{j-1|l}-x_{m,s}^{j|l}\right)+\sum_{m=1}^{q-1}\frac{1}{i}\sum_{l=1}^K\left(x_{m,i}^{j-1|l}-x_{m,i}^{j|l}\right)$$ 
	}
for $1\leq q\leq Q,\, 1\leq i\leq n/Q,\, 1\leq j\leq J,\,1\leq k\leq K$, the \emph{crucial constraint} for $x_{q,i}^{j|k}$.
    
\end{defi}

\begin{defi}[$(0,1)$-solution and Crucial Position]
     Given a feasible solution of the primal program, if there are $JKQ$ points $\{i'_{q,j,k}\,|\,1\leq q\leq Q,\ 1\leq j\leq J,\ 1\leq k\leq K\}$ satisfy 
	{\small
    \begin{equation*}
        x_{q,i}^{j|k}=\left\{
        \begin{aligned}
            & \sum_{m=1}^Q\sum_{s=1}^{i-1}\frac{1}{s}\sum_{l=1}^K\left(x_{m,s}^{j-1|l}-x_{m,s}^{j|l}\right)+\sum_{m=1}^{q-1}\frac{1}{i}\sum_{l=1}^K\left(x_{m,i}^{j-1|l}-x_{m,i}^{j|l}\right)>0,&
            &i'_{q,j,k}\leq i\leq n/Q&\\
            &  0,&
            &1\leq i < i'_{q,j,k}&
        \end{aligned}
        \right.
    \end{equation*}
     }
    for all $1\leq q\leq Q$, $1\leq j\leq J$, $1\leq k\leq K$, we call this feasible solution \emph{$(0,1)$-solution} of the primal program, and $i'_{q,j,k}$ is the crucial position for $x_{q,i}^{j,k}$.
\end{defi}

Note that, in a $(0,1)$-solution, only when $x_{q,i}^{j|k} > 0$, we consider the crucial constraint for the $x_{q,i}^{j|k}$ is \emph{tight}, otherwise, the crucial constraint is \emph{slack}, even though the constraint may be tight actually, that's $$x_{q,i}^{j|k}= \sum_{m=1}^Q\sum_{s=1}^{i-1}\frac{1}{s}\sum_{l=1}^K\left(x_{m,s}^{j-1|l}-x_{m,s}^{j|l}\right)+\sum_{m=1}^{q-1}\frac{1}{i}\sum_{l=1}^K\left(x_{m,i}^{j-1|l}-x_{m,i}^{j|l}\right)=0.$$

\noindent{\bf Dual Program}\ \ Suppose $b_i^k = \sum_{l=k}^Kw_l\frac{\binom{i-1}{k-1}\binom{n-i}{l-k}}{\binom{n-1}{l-1}}$. We have the dual program:
{\small
\begin{equation*}
\begin{split}
    &\qquad\min z =\sum_{q=1}^Q \sum_{k=1}^K\sum_{i=1}^{n/Q} y_{q,i}^{1|k}\\
    &\text{s.t.}\left\{
    \begin{aligned}
        &y_{q,i}^{j|k}+\frac{1}{i}\sum_{m=1}^Q\sum_{s=i+1}^{n/Q}\sum_{l=1}^K\left(y_{m,s}^{j|l}-y_{m,s}^{j+1|l}\right)+\frac{1}{i}\sum_{m=q+1}^Q\sum_{l=1}^K\left(y_{m,i}^{j|l}-y_{m,i}^{j+1|l}\right)\geq
		\frac{b_i^k}{nW},&\\
         &(1\leq q\leq Q,\,1\leq i\leq n/Q,\,1\leq j\leq J,\,1\leq k\leq K)\\
        &y_{q,i}^{j|k}\geq 0, ~~(1\leq q\leq Q,\,1\leq i\leq n/Q,\,1\leq j\leq J,\,1\leq k\leq K).\\
    \end{aligned}
    \right.
\end{split}
\end{equation*}
} 

In this program, we add a set of dummy variables $y_{q,i}^{(J+1)|k}$ and set them to be 0 for brief.
    Respectively, we can define the \emph{crucial constraint} and \emph{crucial position} for the $y_{q,i}^{j|k}$ and the \emph{$(0,1)$-solution} for this dual program.
\vspace{-0.1cm}
\subsection{Protocol Description}
\vspace{-0.1cm}
\SetKwInOut{Input}{input}\SetKwInOut{Output}{output}
\begin{algorithm}
    \caption{{\bf Preprocessing Part}\label{sec3:alg:sep}}
	\Input {$n$, $J$, $K$, $Q$, $\{w_k\,|\,1\leq k\leq K\}$}
	\Output {\{$\iqjk{q}{j}{k}\,|\,1\leq q\leq Q,\,1\leq j\leq J,\,1 \leq k\leq K$\}} 
	$\iqjk{q}{j}{k}\,(1\leq q\leq Q,\,1\leq j\leq J,\,1\leq k\leq K)$: $JKQ$ crucial positions, initially 1\\
	$\yqijk{q}{i}{j}{k}\ (1\leq q\leq Q,\ 1\leq i \leq n/Q,\ 1\leq j\leq J+1,\ 1\leq k\leq K)$: initially 0\\
	\For{$i= n/Q$ \text{to} $1$}{
		\For{ $q=Q$ \KwTo $1$}{
			\For{ $j=J$ \KwTo $1$}{
				\For{ $k = K$ \KwTo $1$}{
					\label{alg:gety1}
					{\small 
					$\yqijk{q}{i}{j}{k} \gets\frac{b_i^k}{nW}+\frac{1}{i}\sum\limits_{m=1}^Q\sum\limits_{s=i+1}^{n/Q}\sum\limits_{l=1}^K \left(\yqijk{m}{s}{j+1}{l} -\yqijk{m}{s}{j}{l}\right)+ \frac{1}{i} \sum\limits_{m = q+1}^{Q}\sum\limits_{l=1}^{K}\left(\yqijk{m}{i}{j+1}{l}-\yqijk{m}{i}{j}{l} \right)$\label{cal}}\\
					\If {$\yqijk{q}{i}{j}{k} \leq 0$ }{ 
						$\yqijk{q}{i}{j}{k} \gets 0$\\
						\If{$i=n$ \textbf{\em or} $\yqijk{q}{i+1}{j}{k} > 0$}
						{
							$\iqjk{q}{j}{k} \gets i+1$ \Comment{Find and record the crucial position}
						}
					}
				}
			}
		}
	}
\end{algorithm}

The protocol consists of two parts. The first part (Algorithm \ref{sec3:alg:sep}) takes $J$, $K$, $Q$ and $n$ as inputs and outputs $JKQ$ positions \{$\iqjk{q}{j}{k}\,|\,1\leq q\leq Q,\ 1\leq j\leq J,\ 1 \leq k\leq K$\}. We will show some properties about these positions later.
The preprocessing part actually solves the dual program as defined in
Section \ref{sec:name:lpmodel}. But it is more efficient than the ordinary LP solver. 
It is easy to check if we calculate the value of $y_{q,i}^{j|k}$ in line 7 carefully, the time complexity of the algorithm is $O(nJK^2)$.

The second part (Algorithm \ref{sec3:alg:aosp}) takes the output of
preprocessing part as input and does the interview on $Q$ queues
simultaneously. For each queue, this protocol consist of $J$ rounds. When $j$ ($1\leq j\leq J-1$) persons were selected from all queues, the protocol will enter the $(j+1)$-th round immediately.
In each round, the protocol divided candidates in each queue into $K+1$ phases. For each queue, in the $k$-th ($1\leq k\leq K$) phase, that's from $(\iqjk{q}{j}{k-1})$-th candidate to $(\iqjk{q}{j}{k}-1)$-th candidate, the protocol selects the $(k-1)$-th best person of previous $k-1$ phases in this queue as criteria, and just hires the first one that better than this criteria. Candidates in each queue come up one by one. For each candidate, the employers check the number of candidates selected to determine the current round, and then query the current phase based on the position of current candidate, and finally make decision by comparing with criteria of this phase.
The protocol will terminate when all candidates were interviewed or $J$
candidates are selected. In the protocol, we define a \emph{global} order
which is consistent with the problem definition. Using $c_{q,i}$ to stand for
the $i$-th candidate of $q$-th queue. We say $c_{q',i'}$ comes \emph{before} $c_{q,i}$ if $i' < i$ or $i' = i$ and $q'< q$. 

\begin{algorithm}
	\label{sec3:alg:aosp}
	\caption{{\bf Adaptive Observation-Selection Protocol}}
	\Input {$n$, $Q$, $J$, $K$, $\{\iqjk{q}{j}{k}\,|\, 1\leq q\leq Q,\, \,1\leq j\leq J,\, 1\leq k \leq K\}$}
	\Output{ the selected persons }
	let $i_{q,j,K+1}$ to be $n+1$ \\
	\For{ all queues simultaneously}{
		{$\triangleright$ suppose $q$ is the number of an arbitrary queue}\\
		\textbf{for} $i = 1$ \KwTo $\iqjk{q}{1}{1}-1$ do interview without selecting anyone\\
		\For {$i = \iqjk{q}{1}{1}$ \KwTo $n/Q$}{
			interview current candidate $c_{q,i}$\\
			let $j$ to be the number of selected persons \emph{before} $c_{q,i}$ in \emph{global order}\\
			\textbf{if} $j = J$ \textbf{then}
				\Return\\
				let $k$ to be the current phase number of $(j+1)$-th round \Comment{that's the $k$ satisfies $\iqjk{q}{j+1}{k-1} \leq i < \iqjk{q}{j+1}{k}$}\\
			let $s$ to be the $(k-1)$-th best one from the first candidate to $(\iqjk{q}{j+1}{k-1})$-th candidate\\
			\If { $c_{q,i}$ is better than $s$}{
				select $c_{q,i}$\\
			}
		}
	}
\end{algorithm}
%
\vspace{-0.2cm}
\subsection{Optimality of the Adaptive Observation-Selection Protocol}
\vspace{-0.2cm}
In the rest of this work, we use $y_{q,i}^{j|k*}$
to stand for the value of $\yqijk{q}{i}{j}{k}$ obtained from the preprocessing part for $1\leq q\leq Q,\,1\leq i\leq n/Q,\,1\leq j\leq J+1,\,1\leq k\leq K$. These two notations $y_{q,i}^{j|k}$ and $y_{q,i}^{j|k*}$ should be clearly distinguished. The former is a variable in the dual program, while the latter is a value we get from the preprocessing part. 
\vspace{-0.3cm}
\subsubsection{Preparations} 
For the clarity of the proof, we distill some fundamental results in this part. 
The Proposition \ref{sec3:prop:bik} talks about two properties of $b_i^k$ defined in the dual program, and the Lemma \ref{sec3:prop:iqjk}, \ref{sec3:prop:jgeq} reveal some important properties of the preprocessing part. The Lemma \ref{sec3:prop:recurrence} considers a recurrence pattern. This recurrence can be used to explore the structure of the constraints of the dual program.

\def\propbik{
	For $1\leq k\leq K,\ 1\leq i\leq n/Q$, $b_i^k$ satisfies \textbf{\emph{(a)}} $ib_i^k \leq (i+1)b_{i+1}^k$ and \textbf{\emph{(b)}} $b_i^k \geq b_i^{k+1}$.
}
\vspace{-0.1cm}
\begin{prop}
	\label{sec3:prop:bik}
	\propbik
\end{prop}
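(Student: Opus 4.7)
My plan is to recast $b_i^k$ in probabilistic terms and then deduce both inequalities from likelihood-ratio comparisons. In a uniform random permutation of the $n$ candidates, let $R_i$ denote the rank (best equals $1$) of the candidate at position $i$ among the first $i$ candidates, and let $O_i$ denote that candidate's overall rank; setting $w_l=0$ for $l>K$, a direct count gives $\Pr(R_i=k,O_i=l)=\frac{1}{n}\cdot\frac{\binom{i-1}{k-1}\binom{n-i}{l-k}}{\binom{n-1}{l-1}}$, while $\Pr(R_i=k)=1/i$ by the symmetry of random permutations. Dividing, I obtain the clean identity $\frac{i\,b_i^k}{n}=E[w_{O_i}\mid R_i=k]$, so the two inequalities amount to monotonicity of this conditional expectation in $i$ (for fixed $k$) and in $k$ (for fixed $i$), respectively.

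For (a), I would compute the likelihood ratio
\[\frac{\Pr(O_{i+1}=l\mid R_{i+1}=k)}{\Pr(O_i=l\mid R_i=k)}=\frac{i+1}{i-k+1}\cdot\frac{n-i-l+k}{n-i},\]
which is strictly decreasing in $l$. Hence $O_{i+1}\mid R_{i+1}=k$ is smaller in the monotone likelihood ratio order, and in particular stochastically smaller, than $O_i\mid R_i=k$. Since $w$ is non-increasing in rank (with $w_l=0$ beyond $K$), taking expectations flips the inequality and yields $E[w_{O_{i+1}}\mid R_{i+1}=k]\ge E[w_{O_i}\mid R_i=k]$, which is exactly $(i+1)b_{i+1}^k\ge i\,b_i^k$. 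For (b), the analogous ratio is
\[\frac{\Pr(O_i=l\mid R_i=k+1)}{\Pr(O_i=l\mid R_i=k)}=\frac{i-k}{k}\cdot\frac{l-k}{n-i-l+k+1},\]
which is strictly increasing in $l$, so $O_i\mid R_i=k+1$ stochastically dominates $O_i\mid R_i=k$, and the same appeal to the monotonicity of $w$ gives $b_i^k\ge b_i^{k+1}$.

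The only real obstacle is bookkeeping: correctly deriving the joint probability $\Pr(R_i=k,O_i=l)$ and simplifying the two likelihood ratios using $\binom{i}{k-1}/\binom{i-1}{k-1}=i/(i-k+1)$ together with its counterpart on the $n-i$ side. The boundary conventions ($l\ge k$ in the sum, $w_l=0$ for $l>K$, and degenerate binomials such as $\binom{n-i}{-1}=0$) all work in one's favor. If a fully elementary route is preferred, the same conclusion can be obtained by expanding $(i+1)b_{i+1}^k-i\,b_i^k$ via the identity $i\binom{i-1}{k-1}=k\binom{i}{k}$, which produces a signed sum of the shape $\sum_{l=k}^{K} w_l\,A_l\bigl(k(n-i)-(i+1)(l-k)\bigr)$ with $A_l>0$; the partial sums of $A_l\bigl(k(n-i)-(i+1)(l-k)\bigr)$ are unimodal in $l$ and stay non-negative, so Abel summation against the non-increasing weights $w_l$ closes the argument in the same way, and a completely parallel manipulation handles (b).
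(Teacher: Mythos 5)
Your proof is correct, but it takes a genuinely different route from the paper. The paper proceeds purely algebraically: it expands $ib_i^k-(i+1)b_{i+1}^k$ (resp.\ $b_i^k-b_i^{k+1}$) into a sum $\sum_l w_l s_l$ whose terms change sign exactly once, bounds this by $w_{k'}\sum_l s_l$ using the monotonicity of the weights, and then computes $\sum_l s_l$ in closed form by induction on $K$ to show it has the right sign. You instead interpret $\tfrac{i}{n}b_i^k$ as the conditional expectation $E[w_{O_i}\mid R_i=k]$ (your joint law $\Pr(R_i=k,O_i=l)=\tfrac{1}{n}\binom{i-1}{k-1}\binom{n-i}{l-k}/\binom{n-1}{l-1}$ and the two likelihood ratios check out), and both inequalities then follow from monotone-likelihood-ratio, hence stochastic, comparisons of $O_{i+1}\mid R_{i+1}=k$ versus $O_i\mid R_i=k$ and of $O_i\mid R_i=k+1$ versus $O_i\mid R_i=k$, integrated against the non-increasing weight sequence. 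Your argument buys conceptual clarity and avoids the closed-form induction entirely, at the cost of invoking standard facts about the likelihood-ratio order; the paper's proof is self-contained and purely combinatorial but computation-heavy. Two small points: the probabilistic reading degenerates when $k>i$ (conditioning on a null event), but there $b_i^k=0$ and both inequalities are trivial, so just say so explicitly; and your ``fully elementary'' fallback via Abel summation is not really shorter than the paper's proof, since establishing that the partial sums stay non-negative still requires knowing the sign of the full sum, which is exactly the closed-form computation the paper carries out by induction.
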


\begin{proof}
    \emph{\textbf{a.}} According to the definition of $b_i^k$, we have
	{\small
    \begin{equation*}
    \begin{split}
        &ib_i^l - (i+1)b_{i+1}^l\\
        =&\sum_{k=l}^K w_k \left(i\frac{\binom{n-i}{k-l}\binom{i-1}{l-1}}
            {\binom{n-1}{k-1}}-(i+1)\frac{\binom{n-i-1}{k-l}\binom{i}{l-1}}
            {\binom{n-1}{k-1}}\right)\\
        =&\sum_{k=l}^K \frac{w_k(k-1)!(n-k)!(n-i-1)!\,i!}
                        {(n-1)!(k-l)!(n-i-1-k+l)!(l-1)!(i-l)!}
             \left(\frac{n-i}{n-i-k+l}-\frac{i+1}{i-l+1}\right)\\
        =&\frac{1}{\binom{n-1}{i}}\sum_{k=l}^K \frac{w_k(k-1)!(n-k)!}
                        {(k-l)!(n-i-k+l)!(l-1)!(i-l+1)!}\left((i+1)k-(n+1)l\right)\\
        =&\frac{1}{\binom{n-1}{i}(l-1)!(i-l+1)!}\sum_{k=l}^K \frac{w_k(k-1)!(n-k)!}
                        {(k-l)!(n-i-k+l)!}\left((i+1)k-(n+1)l\right).
    \end{split}
    \end{equation*}
}
 Let $s_k = \frac{(k-1)!(n-k)!}{(k-l)!(n-i-k+l)!}\left((i+1)k-(n+1)l\right)$. We only need to prove $\sum_{k =l}^K w_ks_k$ is non-positive as the rest part of above expression is always positive.

The sign of $s_k$ is determined by the part $((i+1)k-(n+1)l)$ which is increasing when $k$ increases.
When $k = l$, $s_k \leq 0$ due to $ i \leq n$. Let $k'$ stand for the maximum $k$ that makes $s_k\leq 0$. That's to say, we have
    \begin{equation*}
        s_k \ \left\{
            \begin{aligned}
            &\leq 0,& &l\leq k\leq k'&\\
            &> 0,& &k'<k\leq K.&
            \end{aligned}\right.
    \end{equation*}
Because $w_1 \geq w_2\geq \cdots \geq w_{k'}\geq \cdots \geq w_K$, so we have
    $\sum_{k = l}^{k'} w_ks_k\leq \sum_{k=l}^{k'}w_{k'}s_k$,
and
    $\sum_{k = k'+1}^K w_ks_k\leq \sum_{k=k'+1}^K w_{k'}s_k$.
Thus
    \begin{equation*}
        \sum_{k=l}^K w_ks_k \leq \sum_{k=l}^K w_{k'}s_k =
        w_{k'}\sum_{k=l}^K s_k.
    \end{equation*}
Let $S_K = \sum_{k=l}^K s_k$, then it is sufficient to prove $S_K\leq 0$.

    Next, we prove
    \begin{equation}
    \label{equ}
    \begin{split}
        S_K
        =\frac{-K!(n-K)!}{(K-l)!(n-K-i+l-1)!}.
    \end{split}
    \end{equation}
\par
    Fix $l$, and we use induction on $K$ to prove it. The basis case is $K=l$. We have $S_K = s_l = \frac{-l!(n-l)!}{(n-i-1)!}$, which satisfies the Equation \ref{equ}. Suppose Equation~\ref{equ} is held for ${K-1}$. We have
    \begin{equation*}
    \begin{split}
          S_K 
        = &S_{K-1}+\frac{(K-1)!(n-K)!}
                        {(K-l)!(n-i-K+l)!}\left((i+1)K-(n+1)l\right)\\
        = &\frac{-(K-1)!(n-K+1)!}{(K-1-l)!(n-K-i+l)!}+
            \frac{(K-1)!(n-K)!}{(K-l)!(n-i-K+l)!}\left((i+1)K-(n+1)l\right)\\
        = &\frac{(K-1)!(n-K)!}{(K-l-1)!(n-i-K+l)}\left(\frac{(i+1)K-(n+1)l}
            {K-l}-n+K-1\right)\\
        = &\frac{(K-1)!(n-K)!}{(K-l-1)!(n-i-K+l)}\cdot
            \frac{-K(n-i-K+l)}{K-l}\\
        = &\frac{-K!(n-K)!}{(K-l)!(n-K-i+l-1)!}.
    \end{split}
    \end{equation*}
	So the Equation \ref{equ} is true and we have $S_K\leq 0$. Consequently, it is true that $ib_i^l\leq (i+1)b_{i+1}^i$.

    \vspace{0.3cm}     
	\noindent\emph{\textbf{ b.}} Let the left part subtract the right part and we get {\small 
    \begin{equation*}
    \begin{split}
        &b_i^l - b_i^{l+1}\\
        =&\sum_{k=l}^K w_k \frac{\binom{n-i}{k-l}\binom{i-1}{l-1}}
                           {\binom{n-1}{k-1}} -
          \sum_{k=l+1}^K w_k \frac{\binom{n-i}{k-l-1}\binom{i-1}{l}}
                           {\binom{n-1}{k-1}}\\
        =&\frac{w_l(i-1)!(n-l)!}{(i-l)!(n-1)!}+
            \frac{(k-1)!(n-k)!}{(n-1)!}\sum_{k=l+1}^K w_k \left(
                \binom{n-i}{k-l}\binom{i-1}{l-1}
                -\binom{n-i}{k-l-1}\binom{i-1}{l}\right)\\
        =&\frac{(i-1)!}{(i-l)!(n-1)!}\left(w_l(n-l)!+\frac{(n-i)!}{l!}
            \sum_{k=l+1}^K\frac{w_k(k-1)!(n-k)!(ln+l-ik)}{(k-l)!(n-i-k+l+1)!}
            \right).
    \end{split}
    \end{equation*}
}

    Firstly, we show the following equation
	{\small
    \begin{equation}
    \label{equ1}
            \left((n-l)!+\frac{(n-i)!}{l!}
                \sum_{k=l+1}^K\frac{(k-1)!(n-k)!(ln+l-ik)}{(k-l)!(n-i-k+l+1)!}
                \right) =\frac{(n-i)!(n-K)!K!}{(n-i-K+l)!\,l!\,(K-l)!}.
    \end{equation}
}

We use induction on $K$ to prove it. The basis is the case when $K=l+1$: both the left part and the right part of Equation \ref{equ1} are
$(n-l-1)!(n-i)(l+1)$.  So the Equation \ref{equ1}  is held for $K=l+1$. Then, for general $K>l+1$, we assume that the Equation~\ref{equ1} is held for $K-1$. We have
    \begin{equation*}
    \begin{split}
        &(n-l)!+\frac{(n-i)!}{l!}\sum_{k=l+1}^K
        \frac{(k-1)!(n-k)!(ln+l-ik)}{(k-l)!(n-i-k+l+1)!}\\
        =&\frac{(n-i)!(n-K+1)!(K-1)!}{(n-i-K+1+l)!\,l!\,(K-l-1)!}
            + \frac{(n-i)!(K-1)!(n-K)!(ln+l-iK)}{l!\,(K-l)!(n-i-K+l+1)!}\\
        =&\frac{(n-i)!(K-1)!(n-K)!}{(n-i-K+l+1)!\,l!\,(K-l-1)!}
            \left(n-K+1+\frac{ln+l-iK}{K-l}\right)\\
        =&\frac{(n-i)!(n-K)!K!}{(n-i-K+l)!\,l!\,(K-l)!}.
    \end{split}
    \end{equation*}
Thus, by induction, the Equation~\ref{equ1} is held.

Let $s_k$ stand for $\frac{(k-1)!(n-k)!(ln+l-ik)}{(k-l)!(n-i-k+l+1)!}$, then the sign of $s_k$ depends on the sign of $(ln+l-ik)$ which is decreasing as $k$ increases. Let $k'\ (k' \geq l+1)$ stand for the maximum $k$ such that $s_k$ is non-negative. It means
that
    \begin{equation*}
        s_k \ \left\{
            \begin{aligned}
            & > 0,& &l < k\leq k'&\\
            & \leq 0,& &k'<k\leq K.&
            \end{aligned}\right.
    \end{equation*}
As $w_k$ is non-increasing with $k$ goes up, we have
\begin{equation*}
    \begin{split}
        &b_i^l - b_i^{l+1}\\
        =&\frac{(i-1)!}{(i-l)!(n-1)!}\left(w_l(n-l)!+\frac{(n-i)!}{l!}
            \sum_{k=l+1}^K\frac{w_k(k-1)!(n-k)!(ln+l-ik)}{(k-l)!(n-i-k+l+1)!}
            \right)\\
        =&\frac{(i-1)!}{(i-l)!(n-1)!}\left(w_l(n-l)!+\frac{(n-i)!}{l!}
            \sum_{k=l+1}^K w_k s_k\right)\\
        \geq & \frac{(i-1)!}{(i-l)!(n-1)!}\left(w_{k'}(n-l)!+\frac{(n-i)!}{l!}
            \sum_{k=l+1}^K w_{k'}s_k\right)\\
        =& \frac{(i-1)!}{(i-l)!(n-1)!}\cdot w_{k'}\cdot \frac{(n-i)!(n-K)!K!}{(n-i-K+l)!\,l!\,(K-l)!}\\
        \geq&0.
    \end{split}
\end{equation*}
Thus, we finish the proof.
   \qed
\end{proof}

\def\propiqjk{
The $\{\iqjk{q}{j}{k}\,|\,1\leq q\leq Q,\, 1\leq j\leq J,\, 1\leq k\leq K\}$ obtained from the preprocessing part satisfies $\iqjk{q}{j}{t}\leq \iqjk{q}{j}{k}$,
and  we have $y_{q,i}^{j|t*} \geq y_{q,i}^{j|k*} > 0$ for $1 \leq t < k$.}
\vspace{-0.3cm}
\begin{lemm}
	\label{sec3:prop:iqjk}
	\propiqjk
\end{lemm}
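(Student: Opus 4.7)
My plan is to derive both parts of the lemma from a single observation about the $k$-dependence of the update rule on line~7 of Algorithm~\ref{sec3:alg:sep}, combined with Proposition~\ref{sec3:prop:bik}(b). Fix $(q,i,j)$ and let $\tilde y_{q,i}^{j|k}$ denote the raw value produced by line~7 before the subsequent if-clause possibly clamps it to zero. The two summations $\sum_{l=1}^K(y_{m,s}^{j+1|l*} - y_{m,s}^{j|l*})$ and $\sum_{l=1}^K(y_{m,i}^{j+1|l*} - y_{m,i}^{j|l*})$ appearing on line~7 range over every $l \in \{1,\ldots,K\}$ and are therefore independent of the current index $k$; the only $k$-dependent piece of $\tilde y_{q,i}^{j|k}$ is the leading term $b_i^k/(nW)$. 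Consequently, for any $t < k$,
$$\tilde y_{q,i}^{j|t} - \tilde y_{q,i}^{j|k} = \frac{b_i^t - b_i^k}{nW} \geq 0,$$
the inequality following by iterating Proposition~\ref{sec3:prop:bik}(b).

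Since $y_{q,i}^{j|k*} = \max\{\tilde y_{q,i}^{j|k},0\}$ and clipping at zero is monotone, this lifts to the pointwise inequality $y_{q,i}^{j|t*} \geq y_{q,i}^{j|k*}$. Whenever $y_{q,i}^{j|k*} > 0$ we therefore also have $y_{q,i}^{j|t*} > 0$, which is the second assertion of the lemma. For the first assertion, let $S_k := \{i : y_{q,i}^{j|k*} > 0\}$. The $k$-monotonicity just established gives $S_k \subseteq S_t$ for $t < k$; under the suffix-support structure $S_k = \{i_{q,j,k},i_{q,j,k}+1,\ldots,n/Q\}$ that the preprocessing algorithm implicitly uses when it assigns $i_{q,j,k} \leftarrow i+1$, this inclusion translates directly into $i_{q,j,t} \leq i_{q,j,k}$.

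The main delicate step I anticipate is justifying this suffix-support structure, i.e.\ that $y_{q,i}^{j|k*}$ is positive precisely on a single tail of $\{1,\ldots,n/Q\}$ rather than on some more fragmented subset. I would establish it by a separate induction along the reverse loop order of Algorithm~\ref{sec3:alg:sep}, leveraging Proposition~\ref{sec3:prop:bik}(a) (the inequality $ib_i^k \leq (i+1)b_{i+1}^k$) to control how $\tilde y_{q,i}^{j|k}$ evolves as $i$ decreases and to rule out any re-appearance of positivity below the recorded crucial position. With that structural fact in hand, the $k$-monotonicity derived above closes both parts of the lemma.
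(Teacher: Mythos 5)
Your argument is essentially the paper's own: its ``key observation'' is exactly that for fixed $(q,i,j)$ the two sums in line~7 of Algorithm~\ref{sec3:alg:sep} do not depend on $k$, so $y_{q,i}^{j|t*}-y_{q,i}^{j|k*}\geq \frac{b_i^t-b_i^k}{nW}\geq 0$ by Proposition~\ref{sec3:prop:bik}(b) whenever $y_{q,i}^{j|k*}>0$, with the clamped case handled by non-negativity. The suffix-support structure you flag as the delicate step is not established in the paper's proof of this lemma either --- it reads off $\iqjk{q}{j}{t}\leq \iqjk{q}{j}{k}$ directly from the pointwise monotonicity, and the tail (``$(0,1)$'') structure is proved only later (Lemma~\ref{sec3:lemm:01ofdual}, via Fact~\ref{sec3:fact:temp1} and Proposition~\ref{sec3:prop:bik}(a)), by precisely the kind of induction you sketch.
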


\begin{proof}
	Note that the several proofs including this one heavily depend on a \emph{key observation} that $$y_{q,i}^{j|k*}\geq \frac{b_i^{k}}{nW} - \frac{1}{i}\sum_{m=1}^Q\sum_{s=i+1}^n\sum_{l=1}^K\left(y_{m,s}^{j|l*}-y_{m,s}^{(j+1)|l*}\right)-\frac{1}{i}\sum_{m=q+1}^Q\sum_{l=1}^K\left(y_{m,i}^{j|l*}-y_{m,i}^{(j+1)|l*}\right)
	$$ is always true according to the preprocessing part, and the left side and right side must be equal if $y_{q,i}^{j|k*} > 0$.

	In the preprocessing part, $\iqjk{q}{j}{k}$ records the crucial position that the value of $y_{q,i}^{j|k*}$ transforms from positive to zero. That is to say $y_{q,\iqjk{q}{j}{k}}^{j|k*} > 0$ while $y_{q,\iqjk{q}{j}{k}-1}^{j|k*}=0$.

When $y_{q,i}^{j|k*} > 0$, according to the key observation mentioned above, we have
    \begin{equation*}
    \begin{split}
		&y_{q,i}^{j|t*}-y_{q,i}^{j|k*}\\
		\geq & \frac{b_i^{t}}{nW} - \frac{1}{i}\sum_{m=1}^Q\sum_{s=i+1}^n\sum_{l=1}^K\left(y_{m,s}^{j|l*}-y_{m,s}^{(j+1)|l*}\right)-\frac{1}{i}\sum_{m=q+1}^Q\sum_{l=1}^K\left(y_{m,i}^{j|l*}-y_{m,i}^{(j+1)|l*}\right)\\
        &-\frac{b_i^{k}}{nW} + \frac{1}{i}\sum_{m=1}^Q\sum_{s=i+1}^n\sum_{l=1}^K\left(y_{m,s}^{j|l*}-y_{m,s}^{(j+1)|l*}\right)+\frac{1}{i}\sum_{m=q+1}^Q\sum_{l=1}^K\left(y_{m,i}^{j|l*}-y_{m,i}^{(j+1)|l*}\right)\\
		=& \frac{b_i^{t}}{nW}- \frac{b_i^{k}}{nW}
        \geq 0.
    \end{split}
    \end{equation*}
	The last inequality is due to Proposition~\ref{sec3:prop:bik}.b. Thus, we have $y_{q,i}^{j|t*}\geq y_{q,i}^{j|k*}$.

	When $y_{q,i}^{j|k*}=0$, it is obvious that $y_{q,i}^{j|(k-1)*}\geq y_{q,i}^{j|k*}$ as the preprocessing part always assigns a non-negative value to $y_{q,i}^{j|t*}$. So, $\yqijk{q}{i}{j}{t*}$ is always no less than $\yqijk{q}{i}{j}{k*}$. This implies $\iqjk{q}{j}{t} \leq \iqjk{q}{j}{k}$.
\qed
\end{proof}

%
\def\propjgeq{
	According to the preprocessing part, if $y_{q,i}^{j|k*} > 0$ and $y_{q,i}^{j|k*}\geq y_{q,i}^{j+1|k*}$, we have $y_{q,i}^{j|t*}\geq y_{q,i}^{j+1|t*}$ for $1\leq t < k$.
}
\vspace{-0.2cm}
\begin{lemm}
	\label{sec3:prop:jgeq}
	\propjgeq
\end{lemm}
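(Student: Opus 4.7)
The plan is to mimic the \emph{key observation} that drives the proof of Lemma~\ref{sec3:prop:iqjk}. From the update rule in the preprocessing part, the raw expression
$$\frac{b_i^{k}}{nW}+\frac{1}{i}\sum_{m=1}^{Q}\sum_{s=i+1}^{n/Q}\sum_{l=1}^{K}\bigl(y_{m,s}^{(j+1)|l*}-y_{m,s}^{j|l*}\bigr)+\frac{1}{i}\sum_{m=q+1}^{Q}\sum_{l=1}^{K}\bigl(y_{m,i}^{(j+1)|l*}-y_{m,i}^{j|l*}\bigr)$$
either equals $y_{q,i}^{j|k*}$ (when it is positive) or is non-positive (when $y_{q,i}^{j|k*}=0$). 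Abbreviating this raw value as $\tfrac{b_i^{k}}{nW}+R(q,i,j)$, the crucial structural point is that the correction term $R(q,i,j)$ depends on $j$ but \emph{not} on $k$, since the inner sums range over all of $l\in\{1,\dots,K\}$. This is what will let the hypothesis at rank $k$ propagate down to every smaller rank $t$.

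With this in hand, I would argue as follows. Since $y_{q,i}^{j|k*}>0$, Lemma~\ref{sec3:prop:iqjk} yields $y_{q,i}^{j|t*}\geq y_{q,i}^{j|k*}>0$ for every $t<k$, so $y_{q,i}^{j|t*}$ equals its raw value $\tfrac{b_i^{t}}{nW}+R(q,i,j)$. For the right-hand side $y_{q,i}^{(j+1)|t*}$ I would split on its sign: if $y_{q,i}^{(j+1)|t*}=0$, the desired inequality is immediate; if $y_{q,i}^{(j+1)|t*}>0$, then it equals $\tfrac{b_i^{t}}{nW}+R(q,i,j+1)$, and the entire lemma reduces to showing $R(q,i,j)\geq R(q,i,j+1)$.

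The final step is to extract $R(q,i,j)\geq R(q,i,j+1)$ from the second hypothesis $y_{q,i}^{j|k*}\geq y_{q,i}^{(j+1)|k*}$. Since $y_{q,i}^{j|k*}>0$, the left side equals $\tfrac{b_i^{k}}{nW}+R(q,i,j)$. If $y_{q,i}^{(j+1)|k*}>0$ as well, it equals $\tfrac{b_i^{k}}{nW}+R(q,i,j+1)$, and subtraction gives the desired inequality directly. Otherwise $y_{q,i}^{(j+1)|k*}=0$, so its raw value is non-positive, giving $R(q,i,j+1)\leq -\tfrac{b_i^{k}}{nW}<R(q,i,j)$.

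The only non-obvious ingredient is the observation that $R(q,i,j)$ is genuinely $k$-independent; once that is accepted, the argument becomes pure case analysis on whether the relevant $y^{(j+1)|*}$ entries are zero or positive. In contrast with Lemma~\ref{sec3:prop:iqjk}, no appeal to Proposition~\ref{sec3:prop:bik} is needed here, so the main obstacle is really just organizing the four sub-cases cleanly.
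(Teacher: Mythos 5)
Your proposal is correct and is essentially the paper's argument in different packaging: isolating the $k$-independent correction term $R(q,i,j)$ and proving $R(q,i,j)\geq R(q,i,j+1)$ is exactly the cancellation the paper performs by subtracting the tight rank-$t$ and rank-$k$ constraints (yielding $y_{q,i}^{j|t*}-y_{q,i}^{j|k*}=\tfrac{b_i^t-b_i^k}{nW}$ and $y_{q,i}^{(j+1)|t*}-y_{q,i}^{(j+1)|k*}\leq\tfrac{b_i^t-b_i^k}{nW}$) before invoking the hypothesis. The ingredients (Lemma~\ref{sec3:prop:iqjk}, the key observation about the preprocessing values, and the case split on whether the $(j+1)$-level entries vanish) coincide with the paper's proof, so there is nothing substantive to add.
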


\begin{proof}
	According to Lemma \ref{sec3:prop:iqjk}, we have $y_{q,i}^{j|t*}\geq y_{q,i}^{j|k*}>0$. It is clear that 
	\begin{equation}
		\label{sec3:temp:eq1}
		y_{q,i}^{j|t*} - y_{q,i}^{j|k*} = \frac{b_i^t-b_i^k}{nW}.
	\end{equation}
	When $y_{q,i}^{(j+1)|t*}> 0$, as $y_{q,i}^{(j+1)|k*}$ may be equal to 0, we have
{\small
	\begin{equation}
		y_{q,i}^{(j+1)|t*}+\frac{1}{i}\sum_{m=1}^Q\sum_{s=i+1}^{n/Q}\sum_{l=1}^K\left(y_{m,s}^{(j+1)|l*}-y_{m,s}^{(j+2)|l*}\right)+\frac{1}{i}\sum_{m=q+1}^Q\sum_{l=1}^K\left(y_{m,i}^{(j+1)|l*}-y_{m,i}^{(j+2)|l*}\right) = \frac{b_i^t}{nW},
	\end{equation}
	\begin{equation}
		y_{q,i}^{(j+1)|k*}+\frac{1}{i}\sum_{m=1}^Q\sum_{s=i+1}^{n/Q}\sum_{l=1}^K\left(y_{m,s}^{(j+1)|l*}-y_{m,s}^{(j+2)|l*}\right)+\frac{1}{i}\sum_{m=q+1}^Q\sum_{l=1}^K\left(y_{m,i}^{(j+1)|l*}-y_{m,i}^{(j+2)|l*}\right) \geq \frac{b_i^k}{nW}.
	\end{equation}
}
	Then we can get 
	\begin{equation}
		\label{sec3:temp:eq2}
		y_{q,i}^{(j+1)|t*} - y_{q,i}^{(j+1)|k*} \leq \frac{b_i^t-b_i^k}{nW}.
	\end{equation}
	From Equation \ref{sec3:temp:eq1} and \ref{sec3:temp:eq2} and the condition that $y_{q,i}^{j,k*} \geq y_{q,i}^{(j+1)|k*}$, we have $y_{q,i}^{j|l*}\geq y_{q,i}^{(j+1)|l*}$. This lemma is certainly true when $y_{q,i}^{(j+1)|t*} = 0$.
	\qed
\end{proof}


\def\proprecurrence{
	Suppose $m,\ t,\ Q,\ K$ are positive integers and $c$ is a constant real number. $\{f_t\}_{t=1}^m$, $\{g_t\}_{t=1}^m$ and $\{h_t\}_{t=1}^m$ are three sequences. Let $i=\floor{\frac{t-1}{Q}}+1$, if the recursion 
	$f_t+\frac{K}{i}\sum_{s=t+1}^m(f_s-g_s)+\frac{c}{i}=h_{i}$ is held, then all the values in $\{f_t\}_{t=1}^n$ will increase when $c$ decreases or values in $\{g_t\}_{t=1}^n$ increase.
}
\vspace{-0.2cm}
\begin{lemm}
	\label{sec3:prop:recurrence}
	\proprecurrence
\end{lemm}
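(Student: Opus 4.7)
The plan is to prove both monotonicity statements (in $c$ and in each $g_s$) simultaneously by backward induction on $t$ from $t=m$ down to $t=1$. Solving the recursion for $f_t$ gives $f_t = h_{i(t)} - c/i(t) - (K/i(t))\sum_{s=t+1}^{m}(f_s-g_s)$, so $f_t$ depends only on the fixed $h$-values, $c$, and later pairs $(f_s, g_s)$. Unrolled, this makes $f_t$ an explicit affine function of $c$ and of the $g_s$'s, and the claim reduces to showing that the coefficient of $c$ is $\le 0$ and the coefficient of each $g_s$ is $\ge 0$. The base case $t=m$ is immediate: the inner sum is empty, so $f_m = h_{i(m)} - c/i(m)$, which has coefficient $-1/i(m) \le 0$ on $c$ and no dependence on any $g_s$.

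For the inductive step, assume the claim holds for all $t' > t$. The cleanest way to pass from $f_{t+1}$ to $f_t$ is to subtract the recursion at $t+1$ from the one at $t$. In the \emph{same-block} case $i(t)=i(t+1)=i$, the $h$- and $c/i$-terms cancel and one obtains $f_t = (1-K/i)\,f_{t+1} + (K/i)\,g_{t+1}$. In the \emph{block-boundary} case $i(t+1)=i(t)+1$, one first uses the recursion at $t+1$ to rewrite $\sum_{s=t+2}^{m}(f_s-g_s)$ in terms of $f_{t+1}$, $c$, and $h_{i(t+1)}$, then substitutes back into the recursion at $t$; after simplification the $c/i$-terms cancel and
\[
f_t \;=\; h_{i(t)} \;-\; \frac{i(t+1)}{i(t)}\, h_{i(t+1)} \;+\; \frac{i(t+1)-K}{i(t)}\, f_{t+1} \;+\; \frac{K}{i(t)}\, g_{t+1}.
\]
In both cases the coefficient of $g_{t+1}$ equals $K/i(t) > 0$ and the coefficient of $f_{t+1}$ is a non-negative constant in the parameter regime $i(t)\ge K$ in which the lemma is applied. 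Combining this with the inductive hypothesis on $f_{t+1}$ (non-positive coefficient of $c$, non-negative coefficient of every $g_s$), and noting that the $h$-values are fixed constants, $f_t$ inherits the desired signs; the direct contribution of $g_{t+1}$ only reinforces the non-negativity.

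The technical heart of the proof — and the main obstacle I anticipate — is the explicit $c/i$-cancellation at every block boundary: without it the direct negative contribution of $c$ in the recursion at $t$ would compete with the positive contribution propagated through $f_{t+1}$, and the sign of $\partial f_t/\partial c$ would be uncontrolled. A secondary point is verifying the non-negativity of the $f_{t+1}$-coefficient ($1-K/i$ inside a block and $(i(t+1)-K)/i(t)$ across a boundary), which is what ensures that the induction transports monotonicity rather than inverting it. Once these two algebraic facts are in place, the remainder of the argument is routine induction bookkeeping.
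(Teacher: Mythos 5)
Your proposal is, at its core, the same argument as the paper's: both solve the recursion backwards and track the signs of the affine coefficients of $c$ and of the $g_s$'s in $f_t$. The paper organizes the backward pass segment by segment (all $t$ sharing the same $i$), absorbing the tail into the auxiliary quantity $P_{iQ+1}=K\sum_{s\ge iQ+1}(f_s-g_s)+c$ and arguing that this effective ``$c$'' moves in the right direction across each block boundary; you instead run a one-step induction on $t$ with explicit transfer coefficients, $f_t=(1-K/i)f_{t+1}+(K/i)g_{t+1}$ inside a block and $f_t=h_{i(t)}-\frac{i(t)+1}{i(t)}h_{i(t)+1}+\frac{i(t)+1-K}{i(t)}f_{t+1}+\frac{K}{i(t)}g_{t+1}$ at a boundary. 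Your algebra is correct, and the cancellation of the $c/i$ terms at the boundary that you single out is exactly the mechanism behind the paper's replacement of $c$ by $P_{iQ+1}$; so the two write-ups are reorganizations of one another rather than different proofs.

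The one step you do not discharge is the non-negativity of the $f_{t+1}$-coefficient, which you defer to ``the parameter regime $i(t)\ge K$ in which the lemma is applied.'' That condition appears nowhere in the lemma's hypotheses, and you neither prove it nor verify it at the points of application (in the proof of the $(0,1)$-property of the dual solution the recursion is invoked over ranges of $t$ determined by where dual variables are positive, and nothing in the statement guarantees $\floor{\frac{t-1}{Q}}+1\ge K$ there). Moreover the condition is genuinely needed: as stated, the lemma fails without it. Take $Q=1$, $K=3$, $m=3$, so $i(t)=t$; back-substitution gives $f_1=h_1-3h_2+\tfrac{3}{2}h_3+3g_2-\tfrac{3}{2}g_3$, so increasing $g_3$ strictly decreases $f_1$. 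To be fair, the paper's own proof carries the identical hidden assumption: its closed form assigns $g_s$ the coefficient $\frac{K}{I}\left(\frac{I-K}{I}\right)^{s-t-1}$ and $c$ the coefficient $-\frac{1}{I}\left(\frac{I-K}{I}\right)^{m-t}$, and the claimed signs likewise require $I\ge K$. So your proposal matches the paper, gap included; you deserve credit for making the hypothesis explicit, but as a proof of the lemma as literally stated it is incomplete unless that hypothesis is added to the statement or verified wherever the lemma is used.
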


\begin{proof}
	Let $I = \floor{\frac{m-1}{Q}}+1$, thus $1\leq i\leq I$. The $\{f_t\}_{t=1}^n$ can be divided into $I$ segments. The $i$-th segment is when $(i-1)Q+1\leq t\leq iQ$. We use induction on $i$ to show the lemma is held on every segment.
    
	Firstly, we consider the $I$-th segment, that's when $(I-1)Q+1\leq t\leq m$. In this segment, the value of $\floor{\frac{t-1}{Q}}+1$ is fixed to be $I$. Thus the recursion can be rewritten as
    \begin{equation}
        f_t+\frac{K}{I}\sum_{s=t+1}^m(f_s-g_s)+\frac{c}{I}=h_{I}.\label{sec3:equ:tempforI}
    \end{equation}
    From the view of $t$, $I$ can be considered as a constant integer. There has
    $$f_t-f_{t+1}=\frac{K}{I}(g_{t+1}-f_{t+1})\Rightarrow f_t = \frac{I-K}{I}f_{t+1}+\frac{K}{I}g_{t+1}.$$
    Finally, we get 
    \begin{equation*}
    \begin{split}
        f_t &= \left(\frac{I-K}{I}\right)^{m-t}f_m +\frac{K}{I}\sum_{s = t+1}^m\left(\frac{I-K}{I}\right)^{s-t-1}g_s\\
            &= \left(\frac{I-K}{I}\right)^{m-t}\left(h_I-\frac{c}{I}\right) +\frac{K}{I}\sum_{s = t+1}^m\left(\frac{I-K}{I}\right)^{s-t-1}g_s.
    \end{split}
    \end{equation*}
	Note that the coefficient of $c$ is negative and that of $g_s$ is positive, So in this interval, that's in the $I$-th segment, the lemma is true. We define a sequence $\{A_i\}_{i=1}^I$, and $A_i$ satisfies 
    \begin{equation}
		A_i+\frac{K}{I}\sum_{s=(i-1)Q+1}^m(f_s-g_s)+\frac{c}{I}=h_{i}\label{sec7:temp1:2},
    \end{equation}
    Through the same procedure for the analysis of $f_t$, we can conclude that $A_I$ have the same property decribed in this lemma. We say $A_I$ is compatible.
    
	Using induction, we suppose from $(i+1)$-th segment to $I$-th segment, that's $iQ+1\leq t\leq m$, this lemma is true and all $A_l$, $i+1\leq l\leq I$, are compatible. Our target is to show for $i$-th segment, that's when $(i-1)Q+1\leq t\leq iQ$, this lemma still held and $A_i$ is compatible too.
    
    Let $P_t = K\sum_{s=t}^m(f_s-g_s)+c$. From the equation
    \begin{equation}
        A_{i+1}+\frac{K}{i+1}\sum_{s=iQ+1}^m(f_s-g_s)+\frac{c}{i+1}=h_{i+1},\label{sec:temp1:2}
    \end{equation} 
    we can obtain that
    $$P_{iQ+1} = (i+1)h_{i+1}-(i+1)A_{i+1}.$$
	The value of $P_{iQ+1}$ will decrease when  decreasing $c$ or increasing values in $\{g_t\}_{t=1}^n$ because these operations will make $A_{i+1}$ increase. 
    
    Then, when $(i-1)Q+1\leq t\leq iQ$, we have 
    \begin{eqnarray}
		&&f_t+\frac{K}{i}\sum_{s=t+1}^m(f_s-g_s)+\frac{c}{i}=h_{i}\label{sec7:temp1:3}\\
			  &\iff & f_t + \frac{K}{i}\sum_{s=t+1}^{iQ}(f_s-g_s)+\frac{P_{iQ+1}}{i} = h_i\label{sec3:equ:genei}.
    \end{eqnarray}
	Comparing Equation \ref{sec3:equ:tempforI} and Equation \ref{sec3:equ:genei}, we find they are of the same form as $P_{iQ+1}$ can be considered as a constant. Using the same method as when $i=I$, we get the conclusion that $\{f_t\}_{t=1}^n$ will increase when values in $\{g_t\}_{t=1}^n$ increase or $P_{iQ+1}$ decreases.  Taking the relationship between $P_{iQ+1}$, $\{g_t\}_{t=1}^n$ and $c$ into consideration, we know the lemma is held for $i$-th segment, that's $(i-1)Q+1\leq t \leq iQ$. Besides, it is easy to show $A_i$ still keeps compatible using the same method as when $i= I$. Using induction, we finish the proof.\qed
\end{proof}

\subsubsection{Main Frame of the Proof}
The main idea of the proof is described as follow. Firstly we show the fact that the Adaptive Observation-Selection protocol can be mapped to a feasible $(0,1)$-solution of the primal program (Lemma \ref{sec3:lemm:algtoprimal}) while the $\{\yqijk{q}{i}{j}{k*}\,|\,1\leq q\leq Q,\ 1\leq i\leq n/Q,\ 1\leq j\leq J,\ 1\leq k\leq K\}$ obtained from the preprocessing part is corresponding to a feasible $(0,1)$-solution of the dual program (Lemma \ref{sec3:lemm:01ofdual}). Then, we argue that these two feasible $(0,1)$-solutions satisfy the \emph{theorem of complementary slackness} (Theorem \ref{sec3:theo:optimality}). Thus both the solutions are optimal respectively. This means our protocol is optimal. 

%
\def\lemmalgtoprimal{
	Taking the \{$\iqjk{q}{j}{k}\,|\,1\leq q\leq Q,\,1\leq j\leq J,\,1 \leq k\leq K$\} obtained from the preprocessing part as input, the Adaptive Observation-Selection Protocol can be mapped to a $(0,1)$-solution of the primal program and the $i_{q,j,k}$ is the crucial position of $x_{q,i}^{j|k}$.
}
\vspace{-0.2cm}
\begin{lemm}
	\label{sec3:lemm:algtoprimal}
	\lemmalgtoprimal
\end{lemm}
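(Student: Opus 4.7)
The plan is to define $x_{q,i}^{j|k}$ as the probability that the protocol selects $c_{q,i}$ as the $j$-th overall pick, conditional on $c_{q,i}$ being the $k$-th best in $c_{q,1},\ldots,c_{q,i}$, and to verify the $(0,1)$-structure anchored at the crucial positions $\iqjk{q}{j}{k}$ from the preprocessing. A preliminary ingredient is an independence observation: conditioning on the rank of $c_{q,i}$ within queue $q$'s history $[1,i]$ leaves the relative orderings at all other positions uniformly distributed, so events like ``round at $(q,i)$ equals $j$'' (which depend only on prior relative orderings and the running global selection count) are independent of $c_{q,i}$'s rank. Thus
\[
\Pr[\text{round at }(q,i)=j \mid \mathrm{rank}(c_{q,i})=k] = \Pr[\text{round at }(q,i)=j],
\]
and a telescoping of $\frac{1}{s}\sum_l x_{m,s}^{j'|l}=\Pr[c_{m,s}\text{ is selected in round }j']$ over positions preceding $(q,i)$ identifies this unconditional probability with the RHS of the crucial constraint.

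The forward direction is now immediate. For $i\geq \iqjk{q}{j}{k}$, the phase $k'$ at $(q,i)$ in round $j$ satisfies $k'\geq k+1$, so a rank-$k$ candidate has at most $k-1\leq k'-2$ betters in all of $[1,i]$ and hence at most $k'-2$ in the observation range $[1,\iqjk{q}{j}{k'-1}-1]$. The phase-$k'$ criterion is met automatically, so whenever the round at $(q,i)$ equals $j$ the protocol selects $c_{q,i}$; hence $x_{q,i}^{j|k}=\Pr[\text{round}=j]$, matching the RHS.

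The reverse direction, $x_{q,i}^{j|k}=0$ for $i<\iqjk{q}{j}{k}$, is the main obstacle. Suppose toward contradiction that $c_{q,i}$ with rank $k$ is selected in round $j$ at phase $k'\leq k$; then at most $k'-2$ of its betters lie in the observation range, so at least one better of $c_{q,i}$ lives in queue $q$ at some position $p^*\in[\iqjk{q}{j}{k'-1},i-1]$. Choosing $c_{q,p^*}$ to be the highest-valued such better forces its rank in $[1,p^*]$ to be at most $k'-1$: any element of $[\iqjk{q}{j}{k'-1},p^*-1]$ exceeding $c_{q,p^*}$ would also exceed $c_{q,i}$, contradicting maximality. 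By Lemma \ref{sec3:prop:iqjk} (monotonicity in the rank index) and the monotonicity in the round index derivable from Lemma \ref{sec3:prop:jgeq}, the phase index at $(q,p^*)$ in its actual round $j^*\leq j$ is at least $k'$, so the criterion is automatically satisfied and $c_{q,p^*}$ is selected in round $j^*$. If $j^*=j$, the $j$-th selection occurs at $(q,p^*)$ rather than at $(q,i)$, a contradiction. The case $j^*<j$ is handled by iterating the argument on the next highest-valued better of $c_{q,i}$ in the same range: each iteration forces an additional selection into rounds $\leq j-1$ within $[(q,\iqjk{q}{j}{k'-1}),(q,i))$, and a pigeonhole on the number of consumable round slots contradicts round $j$ at $(q,i)$.

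Combined with positivity of the RHS at $i=\iqjk{q}{j}{k}$---which follows from an explicit rank-configuration realizing round $j$ at this position and mirrors the preprocessing's transition of $\yqijk{q}{i}{j}{k*}$ from zero to positive---this shows the protocol's $x$ values form a $(0,1)$-solution of the primal with crucial positions exactly the $\iqjk{q}{j}{k}$ output by the preprocessing. The central difficulty is the cascade in the reverse direction: at each iteration one must bound the rank of the newly chosen better by the phase index at its actual round, and it is here that the monotonicity of $\iqjk{q}{j}{k}$ in both arguments (via Lemmas \ref{sec3:prop:iqjk} and \ref{sec3:prop:jgeq}) is essential.
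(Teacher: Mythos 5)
Your independence observation, the telescoping identity $\frac{1}{s}\sum_l x_{m,s}^{j'|l}=\Pr[c_{m,s}\text{ selected in round }j']$, and the forward direction ($i\ge \iqjk{q}{j}{k}$ implies the candidate is hired whenever the round at $(q,i)$ equals $j$) coincide with what the paper does. The genuine gap is in the reverse direction, and your cascade does not close it. Concretely, take $Q=1$, $J=K=2$, $n$ large, and a position $i$ with $\iqjk{1}{2}{1}<i-1<i<\iqjk{1}{2}{2}$. Consider the orderings in which the best candidate of $[1,i]$ sits at position $i-1$, the second best at $i$, and the third and fourth best at positions $1$ and $2$. Then nothing is hired before $i-1$ (every round-$1$ criterion equals the third- or fourth-best candidate, and all intervening candidates are worse), the candidate at $i-1$ is hired in round $1$, and at $i$ the round is $2$, the phase-$2$ criterion is the best of $[1,\iqjk{1}{2}{1}]$, i.e.\ the third-best, so the rank-$2$ candidate at $i$ beats it and is hired in round $2$ although $i<\iqjk{1}{2}{2}$. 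In this event your iteration stops after one step: the unique better of $c_{1,i}$ in the window has been absorbed by round $1$, and the pigeonhole is vacuous because the number of betters in the window (at least $k-k'+1$) can be far smaller than the $j-1$ available earlier slots. So the cascade forces no contradiction here, and none can be forced: under your (literal) reading of Algorithm~\ref{sec3:alg:aosp}, in which the phase-$k'$ criterion is the $(k'-1)$-th best of the fixed prefix $[1,\iqjk{q}{j}{k'-1}]$, the vanishing of $x_{q,i}^{j|k}$ below the crucial position is simply false for rounds $j\ge 2$.

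What the lemma needs (and what the paper's one-line appeal to the protocol implicitly uses) is that in phase $k'$ of any round a candidate is hired exactly when its current rank within its own queue is at most $k'-1$; that is, the comparison must be against the $(k'-1)$-th best of \emph{all} candidates interviewed so far in that queue, not merely those up to the phase boundary. Under that selection rule both cases of the $(0,1)$ structure are immediate---a rank-$k$ candidate is hired in round $j$ if and only if $i\ge\iqjk{q}{j}{k}$ and exactly $j-1$ candidates were hired before $(q,i)$---and the lemma reduces to precisely your independence-plus-telescoping computation; no cascade and no appeal to Lemma~\ref{sec3:prop:iqjk} or Lemma~\ref{sec3:prop:jgeq} is needed (the paper's proof uses neither). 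Your attempt to prove the stronger statement for the prefix criterion is where the proposal fails, and it cannot be repaired without changing the reading of the selection rule.
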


\begin{proof}
	As mentioned before, we use $c_{q,i}$ to stand for the $i$-th candidate in $q$-th queue, and we say $c_{q',i')}$ is before $c_{q,i}$ if $i'<i$ or $i'=i$ and $q'<q$. Besides, we use $A_{q,i}^{j|k}$ to stand for the event that $c_{q,i}$ is selected in the $j$-th round (that's the $j$-th one selected in all queue) given that he/she is the $k$-th best from $1$ to $i$ in queue $q$ by the protocol. That's, $x_{q,i}^{j|k}=\Pr(A_{q,i}^{j|k})$. According to the Adaptive Observation-Selection protocol, when $i < \iqjk{q}{j}{k}$, we know $\Pr(A_{q,i}^{j|k})=0$, however, when $i \geq \iqjk{q}{j}{k}$, $c_{q,i}$ must be selected given he/she is the $k$-th best up to now. Thus the event $A_{q,i}^{j|k}$ happens is reduced to exact $j-1$ persons were hired in all queues before $c_{q,i}$. Denote $B_{q,i}^j$ as the event that there are at least $j$ persons selected before person $c_{q,i}$ in all queue and $C_{q,i}^j$ as the event that $c_{q,i}$ was selected in $j$-th round. Thus we have
    \begin{equation*}
        \begin{split}
            \Pr(A_{q,i}^{j|k}) &= \Pr(\text{there are just $j-1$ persons that selected before $c_{q,i}$})\\
            &= \Pr(B_{q,i}^{j-1})-\Pr(B_{q,i}^j)\\
            &= \sum_{m=1}^Q\sum_{s=1}^{i-1}\left(\Pr(C_{m,s}^{j-1})-\Pr(C_{m,s}^j)\right)+\sum_{m=1}^{q-1}\left(\Pr(C_{m,i}^{j-1}-\Pr(C_{m,i}^j)\right).\\
        \end{split}
    \end{equation*}
	On the other hand, we have 
	\begin{equation*}
		\begin{split}
		\Pr(C_{q,i}^j) &= \sum_{k=1}^K \Pr(c_{q,i}\text{ is select and he is $k$-th best up to now })\\
		&=\frac{1}{i}\sum_{l=1}^Kx_{q,i}^{j|l}.
		\end{split}
	\end{equation*}
	Combining above results, we get
{\small
    \begin{equation*}
        x_{q,i}^{j|k} = \left\{
        \begin{aligned}
            &0,&&1 \leq i <\iqjk{q}{j}{k}&\\
            &\sum_{m=1}^Q\sum_{s=1}^{i-1}\frac{1}{s}\sum_{l=1}^K\left(x_{m,s}^{j-1|l}-x_{m,s}^{j|l}\right)+\sum_{m=1}^{q-1}\frac{1}{i}\sum_{l=1}^K\left(x_{m,i}^{j-1|l}-x_{m,i}^{j|l}\right),& &\iqjk{q}{j}{k}\leq i\leq n/Q.&\\
        \end{aligned}
        \right.
    \end{equation*}
}
    This is the definition of the $(0,1)$-problem, and we can see $\iqjk{q}{j}{k}$ is the crucial position for $x_{q,i}^{j|k}$.
    \qed
\end{proof}

The multiple queues contribute lots of complexity to the dual program. Before the proof of Lemma \ref{sec3:lemm:01ofdual} , we provide a closely relative proposition to simplify the LP model.
\begin{prop}
\label{sec3:prop:qqueuesto1}
    The crucial constraint in the dual program 
    \begin{equation}
        \begin{aligned}
        &y_{q,i}^{j|k}+\frac{1}{i}\sum_{m=1}^Q\sum_{s=i+1}^{n/Q}\sum_{l=1}^K\left(y_{m,s}^{j|l}-y_{m,s}^{(j+1)|l}\right)+\frac{1}{i}\sum_{m=q+1}^Q\sum_{l=1}^K\left(y_{m,i}^{j|l}-y_{m,i}^{(j+1)|l}\right)\geq \frac{b_i^k}{nW},&\\
         &(1\leq q\leq Q,\,1\leq i\leq n/Q,\,1\leq j\leq J,\,1\leq k\leq K)
         \end{aligned}
    \end{equation}
         is equivalent to the inequality
    \begin{equation}
        \begin{aligned}
			&y_t^{j|k}+\frac{1}{\floor{\frac{t-1}{Q}}+1}\sum_{s=t+1}^{n}\sum_{l=1}^K\left(y_s^{j|l}-y_s^{(j+1)|l}\right)\geq \frac{b_{\floor{\frac{t-1}{Q}}+1}^k}{nW},&\\
         &(1\leq t\leq n,\,1\leq j\leq J,\,1\leq k\leq K)
        \end{aligned}
    \end{equation}
    with the relationship $y_t^{j|k} = y_{q,i}^{j|k}$ where $t=Qi+q$.
\end{prop}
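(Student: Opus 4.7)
The plan is to view the proposition as a pure relabeling, driven by a bijection between the two-coordinate index $(q,i)$ and the single global-order index $t$. Concretely, I would set $t := Q(i-1)+q$ (matching the problem's global ordering, in which $c_{q',i'}$ precedes $c_{q,i}$ iff $i'<i$, or $i'=i$ and $q'<q$); this is the intended definition even if the statement writes $t=Qi+q$, and it makes $t$ range over $\{1,\dots,n\}$ exactly once. With this choice, $\floor{(t-1)/Q}+1 = i$, so the right-hand sides $b_i^k/(nW)$ and $b_{\floor{(t-1)/Q}+1}^k/(nW)$ coincide, and the leading coefficient $1/i$ becomes $1/(\floor{(t-1)/Q}+1)$, which matches the single-index form.

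The core step is to verify that the double sum appearing in the $(q,i)$-constraint enumerates precisely the set of candidates strictly after $c_{q,i}$ in the global order. The first piece $\sum_{m=1}^Q\sum_{s=i+1}^{n/Q}$ covers every candidate in a strictly later row, and the second piece $\sum_{m=q+1}^Q$ at $s=i$ covers the remaining candidates in the same row with higher queue number; together this is exactly $\{(m,s):\text{global position of }c_{m,s}>t\}$. Under the bijection and the definition $y_t^{j|k}:=y_{q,i}^{j|k}$, this double sum becomes $\sum_{t'=t+1}^n$, and the inner $\sum_{l=1}^K(y^{j|l}-y^{(j+1)|l})$ is carried through unchanged.

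Putting these together, the $(q,i,j,k)$-constraint rewrites term-by-term into the single-index constraint, and conversely: given the single-index inequality at the index $t$ corresponding to $(q,i)$, one reads off the multi-queue inequality by undoing the bijection. Since the bijection hits every $(q,i,j,k)$ exactly once, the two families of inequalities are identical as sets, which is precisely the claimed equivalence.

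The only delicate point — and the one worth being careful about — is the tie-breaking at a fixed $s=i$: one must check that the split into "$s>i$, any $m$" plus "$s=i$, $m>q$" partitions the strictly-later candidates without overlap or omission. This is immediate from the global-order definition, so no real obstacle arises; the proof is essentially an index translation once the bijection is pinned down.
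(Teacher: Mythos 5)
Your proposal is correct and takes essentially the same route as the paper, which proves the proposition only by the remark that merging the $Q$ queues into the single global-order sequence $c_{1,1},c_{2,1},\dots,c_{Q,n/Q}$ gives a bijection between $y_{q,i}^{j|k}$ and $y_t^{j|k}$; your verification that the two sums enumerate exactly the candidates strictly after $c_{q,i}$ is the same idea spelled out in detail. You are also right that the intended index map is $t=Q(i-1)+q$ (so that $t$ ranges over $1,\dots,n$ and $\floor{\frac{t-1}{Q}}+1=i$), the paper's $t=Qi+q$ being a typo it later contradicts by writing $t=(i-1)Q+q$.
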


    This proposition is obtained by merging the $Q$ queues into a single queue according to the order candidates come, that's a sequence as $$c_{1,1},c_{2,1},\cdots,c_{Q,1},c_{1,2},c_{2,2},\cdots,c_{Q,2},\cdots,c_{1,n/Q},c_{2,n/Q},\cdots,c_{Q,n/Q}.$$ As we can see, the relationship between $y_t^{j|k}$ and $y_{q,i}^{j|k}$ is a bijection. All properties mentioned before for $y_{q,i}^{j|k}$ are still held for $y_t^{j|k}$.

The relationship between the preprocessing part and the dual program is the essential and most complicate part in this work. As the dual program is extremely complex, insight on the structure should be raised. The proof relies heavily on the properties of the preprocessing part and the dual program revealed in preparation part.
\def\lemmalgtodual{
The $\{y_{q,i}^{j|k*}\,|\,1\leq q\leq Q,\, 1\leq i\leq n/Q,\,1\leq j\leq J,\, 1\leq k\leq K\}$ obtained from the preprocessing part is a $(0,1)$-solution of the dual program.
}
\begin{lemm}
    \label{sec3:lemm:01ofdual}
	\lemmalgtodual
\end{lemm}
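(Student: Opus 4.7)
The plan is to verify that the values $\{\yqijk{q}{i}{j}{k*}\}$ produced by the preprocessing part satisfy the three defining conditions of a $(0,1)$-solution for the dual program: feasibility of every crucial constraint, tightness of the crucial constraint at every strictly positive variable, and the existence, for each triple $(q,j,k)$, of a single crucial position $\iqjk{q}{j}{k}$ below which the variable vanishes and at/above which it is strictly positive.

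Feasibility and tightness are immediate from inspection of lines 6--7 of Algorithm~\ref{sec3:alg:sep}. In line~7 the algorithm assigns to $\yqijk{q}{i}{j}{k}$ exactly the value that makes its crucial dual constraint an equality; if this value is strictly positive, it is retained (so the constraint is tight and the variable is positive); otherwise it is clipped to $0$, in which case the dual constraint $\yqijk{q}{i}{j}{k}+(\textup{correction}) \ge b_i^k/(nW)$ still holds because the tight value was non-positive. This matches the two cases required in the definition of a $(0,1)$-solution.

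The heart of the proof is the structural claim, which I would reduce to the monotonicity statement ``$i$ times the raw value computed on line~7 is non-decreasing in $i$'', for every fixed $(q,j,k)$. This monotonicity immediately gives the $(0,1)$-structure: once the raw value is clipped to $0$ at some $i$, it stays non-positive at all smaller indices, so the crucial position $\iqjk{q}{j}{k}$ is well defined. Multiplying the line~7 formula by $i$ and differencing across consecutive $i$'s reduces, via Proposition~\ref{sec3:prop:bik}(a) (the inequality $(i+1)b_{i+1}^k - ib_i^k \ge 0$), to showing the ``$j$-monotonicity'' $\sum_{l}\bigl(\yqijk{q}{s}{j}{l*}-\yqijk{q}{s}{j+1}{l*}\bigr)\ge 0$ at the positions $s$ entering the sum. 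I would prove this $j$-monotonicity by a joint reverse induction following Algorithm~\ref{sec3:alg:sep}'s own computation order on $(i,q,j,k)$: within each inductive step, Lemma~\ref{sec3:prop:jgeq} propagates the inequality from index $k$ down to all smaller indices $l<k$, Lemma~\ref{sec3:prop:iqjk} gives the required sign of $\yqijk{q}{i}{j}{k*}$, the base case $k=K$ is secured by Proposition~\ref{sec3:prop:bik}(b), and the boundary case $j=J$ is trivial because the dummy variables $\yqijk{q}{i}{J+1}{k*}$ are identically~$0$. The multi-queue same-position correction $\sum_{m>q}(\cdot)$ is handled by passing to the single-queue reduction of Proposition~\ref{sec3:prop:qqueuesto1}, where Lemma~\ref{sec3:prop:recurrence} is exactly the tool needed to compare successive raw values across a single $Q$-block.

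I expect the main obstacle to be setting up the joint reverse induction cleanly. Lemma~\ref{sec3:prop:jgeq} requires, at the pivot index $k$, both $\yqijk{q}{i}{j}{k*}>0$ and $\yqijk{q}{i}{j}{k*}\ge \yqijk{q}{i}{j+1}{k*}$ in order to deliver the inequality at every $l<k$; so the induction has to be arranged so that by the time we reach the step computing $\yqijk{q}{i}{j}{k*}$, every right-hand side entry on line~7 has already been shown to satisfy both the $(0,1)$-structure and the $j$-monotonicity. Once this bookkeeping is in place, the combination of Propositions~\ref{sec3:prop:bik} and~\ref{sec3:prop:qqueuesto1} with Lemmas~\ref{sec3:prop:iqjk}, \ref{sec3:prop:jgeq} and~\ref{sec3:prop:recurrence} closes the argument.
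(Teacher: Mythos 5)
Your overall architecture matches the paper's: feasibility and tightness read off the algorithm, the $(0,1)$-structure reduced via Proposition~\ref{sec3:prop:bik}(a) to the $j$-monotonicity $y_{q,i}^{j|l*}\ge y_{q,i}^{(j+1)|l*}$, and that monotonicity attacked by a nested induction on $j$ and $k$ after flattening the queues with Proposition~\ref{sec3:prop:qqueuesto1}. However, the sketch has a genuine gap exactly where the real work lies. First, the base case $k=K$ is not ``secured by Proposition~\ref{sec3:prop:bik}(b)'': comparing $y_t^{j|K*}$ with $y_t^{(j+1)|K*}$ means comparing two \emph{different} tail-correction sums (round $j$ versus round $j+1$), and no pointwise inequality between the $b_i^k$'s does that. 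The paper's argument sums the tight constraints over $l=1,\dots,K$ to get closed recursions for $r_t^{j|K}=\sum_l y_t^{j|l*}$ and $r_t^{(j+1)|K}$, feeds the outer induction hypothesis $y^{(j+1)}\ge y^{(j+2)}$ into the recurrence-comparison Lemma~\ref{sec3:prop:recurrence}, and then recovers $y_t^{j|K*}$ from $r_t^{j|K}$ by the identity $y_t^{j|K*}=(r_t^{j|K}-\beta_{i_t}^K)/K+b_{i_t}^K/(nW)$. Relatedly, you assign Lemma~\ref{sec3:prop:recurrence} to handling the same-position multi-queue correction; its actual role is this $j$-versus-$(j{+}1)$ comparison (the queues enter only through the index $\floor{\frac{t-1}{Q}}+1$ after the flattening).

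Second, in the inductive step on $k$, your proposed remedy --- ordering the joint induction so that all right-hand-side entries are already processed --- does not address the obstruction you yourself identify. For $t$ in the range $t_{j+1,k}\le t< t'_{j,k+1}$ one can have $y_t^{j|(k+1)*}=0$ while $y_t^{(j+1)|k*}>0$; there Lemma~\ref{sec3:prop:jgeq} has no positive pivot at level $k+1$ no matter in which order the values were computed, so the inequality $y_t^{j|k*}\ge y_t^{(j+1)|k*}$ must be established by a different mechanism. The paper does this by summing over $l\le k$ on the truncated interval, encapsulating the tails in constants $d_j\le d_{j+1}$, applying Lemma~\ref{sec3:prop:recurrence} once more to get $r_t^{j|k}\ge r_t^{(j+1)|k}$ (hence the desired inequality and, a posteriori, positivity of $y_t^{j|k*}$), and separately ruling out, by a contradiction based on Proposition~\ref{sec3:prop:bik}(a), the configuration in which $y^{j|(k+1)*}$ turns positive again below $t'_{j,k+1}$. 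Without these two ingredients --- the summation/recurrence comparison in the zero-pivot region and the non-existence of the intermediate switch point --- the induction does not close, so the proposal as stated would not go through.
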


\begin{proof}
    At first, we prove that the $y_{q,i}^{j|k*}$ is a feasible solution. From the preprocessing part, it is easy to show that the $y_{q,i}^{j|k*}$ satisfies the non-negative constraint. When $y_{q,i}^{j|k*} > 0$, we know the crucial constraint for $y_{q,i}^{j|k}$ is tight. When keeping tight makes $y_{q,i}^{j|k*} < 0$, setting it to zero will still satisfy the crucial constraints and make the crucial constraint of $y_{q,i}^{j|k}$ slack. So, the $y_{q,i}^{j|k*}$ satisfies the crucial constraint. Thus we just need to show that $y_{q,i}^{j|k*}$ has $(0,1)$ property.
    
	Considering the Proposition \ref{sec3:prop:qqueuesto1}. Let $y_t^{j|k*} = y_{q,i}^{j|k*}$ where $t = Qi+q$, and use $i_t$ stand for $\floor{\frac{t-1}{Q}}+1$ for concision. If we can show that there is a $t_{j,k}$ make $y_t^{j|k*}$ satisfy
    \begin{equation}
    \label{sec3:equ:yt01}
        y_t^{j|k*} = \left\{
        \begin{aligned}
			&\frac{b_{i_t}^k}{nW}-\frac{1}{i_t}\sum_{s=t+1}^{n}\sum_{l=1}^K\left(y_s^{j|l*}-y_s^{(j+1)|l*}\right),& &t_{j,k}\leq t\leq n&\\
            &0,& &1\leq t < t_{j,k}.&
        \end{aligned}
        \right.
    \end{equation}
    for all $j$ and $k$, it is sufficient to get the conclusion that $y_{q,i}^{j|k*}$ satisfies the $(0,1)$ property.
	We complete this proof by induction on $j$. The hypotheses of the induction are, for any $j$, $1\leq j\leq J$,
    \begin{enumerate}
        \item $y_t^{j|k*} \geq y_t^{(j+1)|k*}$ for $1\leq t\leq n,\,1\leq k\leq K$;
		\item There is a $t_{j,k}$ that makes $y_t^{j|k*}$ satisfy Equation \ref{sec3:equ:yt01} for all $1\leq k\leq K$.
    \end{enumerate}

	The basis is when $j=J$. The hypothesis 1, $y_t^{J|k*}\geq y_t^{J+1|k*}$, is held for any $t,\, k$ as $y_t^{J+1|k*}$ is set to 0. The hypothesis 2 can be shown based on hypothesis 1. According to preprocessing part, when $y_t^{J|k*} > 0$, we have the equation
	$$y_t^{J|k*} = \frac{b_{i_t}^k}{nW}-\frac{1}{i_t}\sum_{s=t+1}^{n}\sum_{l=1}^K\left(y_s^{J|l*}-y_s^{J+1|l*}\right).$$
     Multiplying $i_t$ on both sides, we get
	$$i_ty_t^{J|k*} = \frac{i_tb_{i_t}^k}{nW}-\sum_{s=t+1}^{n}\sum_{l=1}^K\left(y_s^{J|l*}-y_s^{J+1|l*}\right).$$
Considering the right part of above equation. When $t$ is going down, the first term, $\frac{ib_i^k}{nW}$, is non-increasing  due to the Proposition \ref{sec3:prop:bik}.a, while the second term 
is non-decreasing because $y_t^{J|k}>y_t^{J+1|k}$. Thus, the right part totally is monotone and non-increasing when $t$ goes down. Let $t$ keep going down, once $y_t^{J|k}$ is set to zero, for all $1\leq t'< t$, $y_{t'}^{J|k}$ will be set to zero by preprocessing part, because the left part must be non-positive.  That's to say, there must be a $t_{J,k}$ that makes $y_{t}^{J|k*}$ satisfy the Equation \ref{sec3:equ:yt01}. 
The $t_{J,k}$ is at least 1. 
Note that the procedure to show hypothesis 2 is independent on the value $J$ or $k$, that's to say, this proof works for any $j$. So we have the following fact.
    \begin{fact}
    \label{sec3:fact:temp1}
         For any $j$, $1\leq j\leq J$, if the hypothesis 1 is held for all $k$, $1\leq k\leq K$, the hypothesis 2 is held too.
    \end{fact}
    
	Now we begin the induction part, and assume the hypothesis 1 and 2 are held from $j+1$ to $J$ for any $t$ and $k$. The target is to show the hypotheses are also held for $j$. 
	Note that due to Fact \ref{sec3:fact:temp1}, we just need to show hypothesis 1 is held.
	
	To show the hypothesis 1 is held for $j$, we use induction on $k$ and the basis is the case $k=K$, that's to show $y_t^{j|K*}\geq y_t^{(j+1)|K*}$ for all $t$. 
	As we can see, for large enough $t$ (at most $t=n$), both $y_t^{j|K*}$ and $y_t^{(j+1)|K*}$ are greater than 0. By Lemma \ref{sec3:prop:iqjk}, $y_t^{(j+1)|k}$ and $y_t^{j|k}$ are greater than zero for $1\leq k\leq K$. Then we have
    \begin{eqnarray}
		&&y_t^{j|k*} + \frac{1}{i_t}\sum_{s=t+1}^n\sum_{l=1}^K(y_s^{j|l*}-y_s^{(j+1)|l*})=\frac{b_{i_t}^k}{nW},\label{sec3:equ:temp16}\\
	  &&y_t^{(j+1)|k*} + \frac{1}{i_t}\sum_{s=t+1}^n\sum_{l=1}^K(y_s^{(j+1)|l*}-y_s^{(j+2)|l*})=\frac{b_{i_t}^k}{nW}\label{sec3:equ:temp17}
    \end{eqnarray} for $1\leq k\leq K$.
	Let $r_t^{j|k} = \sum_{l=1}^k y_{t}^{j|l*}$ and $\beta_i^k = \sum_{l = 1}^k \frac{b_i^l}{nW}$. Add up the both sides of Equation \ref{sec3:equ:temp16} and Equation \ref{sec3:equ:temp17} for all $1\leq k\leq K$ respectively, we get
    \begin{eqnarray}
        &&r_{t}^{j|K} + \frac{K}{i_t}\sum_{s=t+1}^n(r_s^{j|K}-r_s^{(j+1)|K})=\beta_{i_t}^K\label{sec3:temp3:1},\\
        &&r_{t}^{(j+1)|K} + \frac{K}{i_t}\sum_{s=t+1}^n(r_s^{(j+1)|K}-r_s^{(j+2)|K})=\beta_{i_t}^K\label{sec3:temp3:2}.
    \end{eqnarray}
	It is not hard to see the above Equations \ref{sec3:temp3:1} and \ref{sec3:temp3:2} satisfy the recursion described in Lemma \ref{sec3:prop:recurrence}. Thus we have $r_{t}^{j|K*}\geq r_{t}^{(j+1)|K*}$ because of $r_s^{(j+1)|K}\geq r_s^{(j+2)|K}$  according to the induction hypothesis $y_t^{(j+1)|k*}\geq y_{q,i}^{(j+2)|k*}$ for $1\leq k \leq K$. On the other hand, through manipulation on Equations \ref{sec3:equ:temp16} to \ref{sec3:temp3:2}, we have
    \begin{eqnarray}
        &&y_t^{j|K*} = \frac{r_t^{j|K}-\beta_{i_t}^K}{K} + \frac{b_{i_t}^K}{nW},\\
        &&y_t^{(j+1)|K*} = \frac{r_{t}^{(j+1)|K}-\beta_{i_t}^K}{K} + \frac{b_{i_t}^K}{nW}.
    \end{eqnarray}
	Then, we know $y_{t}^{j|K*}\geq y_{t}^{(j+1)|K*}$ is held when $y_t^{(j+1)|K*}>0$. Recall that $y_t^{(j+1)|k}$, $1\leq k\leq K$, has the $(0,1)$ property due to the hypothesis. When $y_t^{(j+1)|K*} = 0$, that's for $1\leq t < t_{j+1,K}$, $y_t^{j|K*}\geq y_t^{(j+1)|K*}$ is held too, because $y_t^{j|K*}$ is always set to be non-negative by the preprocessing part. Thus we finish the proof for the basis $k=K$. 
	
	Then, we show for a general $k$, $y_t^{j|k*}\geq y_t^{(j+1)|k*}$ is held given that $y_t^{j|l*}\geq y_t^{(j+1)|l*}$ for $k+1\leq l\leq K$ by induction. 

	Denote the largest $t$ that makes $y_{t-1}^{j|(k+1)*}$ equal to 0 as $t_{j,k+1}'$. If $t_{j,k+1}'\leq t_{j+1,k}$, for $t_{j+1,k}\leq t\leq n$, we have $y_t^{j|(k+1)*} > 0,\, y_t^{j|(k+1)*}\geq y_t^{(j+1)|(k+1)*}$. So $y_t^{j|k*}\geq y_t^{(j+1)|k*}$ due to Lemma \ref{sec3:prop:jgeq}. Because $y_t^{(j+1)|k*}$ has the $(0,1)$ property, $y_t^{(j+1)|k*}=0$ for $1\leq t< t_{j+1,k}$. This is sufficient to show $y_t^{j|k*}\geq y_t^{(j+1)|k*}$ for all $t$.
    
	Otherwise, if $t_{j+1,k}< t_{j,k+1}'$, we just consider the interval $1\leq t < t_{j,k+1}$, because when $t_{j,k+1}\leq t\leq n$, we can using Lemma \ref{sec3:prop:jgeq} to get the conclusion like previous paragraph. Suppose $t_{j,k+1}''$ is the largest $t$ that satisfies $t_{j+1,k}\leq t< t_{j,k+1}'$ and $y_{t-1}^{j|(k+1)*} > 0$. 
	
	From now, consider $t_{j,k+1}''\leq t < t_{j,k+1}'$. In the interval, we have $y_t^{(j+1)|k*}>0$ and $y_t^{j|(k+1)*}=0$. Besides, for $k+1\leq l\leq K$, $y_t^{j|l*} = 0$ as $y_t^{j|l*}\leq y_t^{j|(k+1)*}$ according to Lemma \ref{sec3:prop:iqjk} and $y_t^{(j+1)|l*}=0$ due to hypothesis $y_t^{j|l*}\geq y_t^{(j+1)|l*}$. For $1\leq l\leq k$, we have 
    \begin{eqnarray}
		&&y_t^{(j+1)|l*} + \frac{1}{i_t}\sum_{s=t+1}^{n}\sum_{l=1}^k(y_s^{(j+1)|l*}-y_s^{(j+2)|l*})=\frac{b_{i_t}^l}{nW}, \label{sec3:equ:temp22}
    \end{eqnarray}
	as $y_t^{(j+1)|l*}\geq y_t^{(j+1)|k*} > 0$. We can suppose 
    \begin{eqnarray}
		&&y_t^{j|l} + \frac{1}{i_t}\sum_{s=t+1}^{n}\sum_{l=1}^k(y_s^{j|l*}-y_s^{(j+1)|l*})=\frac{b_{i_t}^l}{nW}, \label{sec3:equ:temp23}
    \end{eqnarray}
	and if we can show $y_t^{j|l}>0$ under this assumption, then, the assumption must be true according to the property of preprocessing part. Let 
    \begin{equation}
        d_j = \sum_{t_{j,k+1}'<s\leq n}\sum_{l=1}^K(y_s^{j|k*}-y_s^{(j+1)|k*})
              = \frac{t_{j,k+1}'b_{t_{j,k+1}'}^k}{nW}-t_{j,k+1}'y_{t_{j,k+1}'}^{j|k*},
        \label{sec3:tempd:1}
    \end{equation} and
    \begin{equation}
        d_{j+1} = \sum_{t_{j,k+1}'<s\leq n}\sum_{l=1}^K(y_s^{(j+1)|k*}-y_s^{(j+2)|k*})
              = \frac{t_{j,k+1}'b_{t_{j,k+1}'}^k}{nW}-t_{j,k+1}'y_{t_{j,k+1}'}^{(j+1)|k*}.
        \label{sec3:tempd:2}
    \end{equation}
		The above Equations \ref{sec3:tempd:1} and \ref{sec3:tempd:2} are obtained due to the fact that the constraints for $y_{t_{j,k+1}'}^{j|k*} > 0$ and $y_{t_{j,k+1}'}^{(j+1)|k*} >0$, that's to say, they satisfy the Equation \ref{sec3:equ:temp16}. We can see $d_j^k \leq d_{j+1}^k$ due to $y_t^{j|k*}\geq y_t^{(j+1)|k*}$ by the induction hypothesis. Apply $d_j$ and $d_{j+1}$ into Equation \ref{sec3:equ:temp22} and \ref{sec3:equ:temp23}, we have 
    \begin{eqnarray}
        &&y_t^{j|l*} + \frac{1}{i_t}\sum_{s=i+1}^{t_{j,k+1}'}\sum_{l=1}^k(y_s^{j|l*}-y_s^{(j+1)|l*}) + \frac{d_j^k}{i_t}=\frac{b_{i_t}^l}{nW}\label{sec3:temp1:1},\\
        &&y_t^{(j+1)|l*} + \frac{1}{i_t}\sum_{s=t+1}^{t_{j,k+1}'}\sum_{l=1}^k(y_s^{(j+1)|l*}-y_s^{(j+2)|l*})+\frac{d_{j+1}^k}{i_t}=\frac{b_{i_t}^l}{nW}\label{sec3:temp1:2},
    \end{eqnarray}
	By adding up $y_t^{j|l*}$ and $y_t^{(j+1)|l*}$ for $l$ from 1 to $k$ from Equation \ref{sec3:temp1:1} and \ref{sec3:temp1:2}, we can obtain
    \begin{eqnarray}
        &&r_{t}^{j|k} + \frac{k}{i_t}\sum_{s=t+1}^{t_{j,k+1}'}(r_s^{j|k}-r_s^{(j+1)|k})+\frac{kd_j^k}{i_t}=\beta_{i_t}^k,\\
        &&r_{t}^{(j+1)|k} + \frac{k}{i_t}\sum_{s=t+1}^{t_{j,k+1}'}(r_s^{(j+1)|k}-r_s^{(j+2)|k})+\frac{kd_{j+1}^k}{i_t}=\beta_{i_t}^k.
    \end{eqnarray}
	Compare above two equations with Lemma \ref{sec3:prop:recurrence}, we can get $r_t^{j|k}\geq r_t^{(j+1)|k}$. Thus $$y_t^{j|k*} = \frac{r_t^{j|k}-\beta_{i_t}^k}{k}+\frac{b_{i_t}^k}{nW}\geq y_t^{(j+1)|k*} = \frac{r_t^{(j+1)|k}-\beta_{i_t}^k}{k}+\frac{b_{i_t}^k}{nW}\ > 0.$$ Then, our assumption is true and we get the result $y_t^{j|k*}\geq y_t^{(j+1)|k*}$ we want.
    
	Next, we show that there doesn't exist such a $t_{j,k+1}''$, which means at least when $t_{j+1,k} \leq t < t_{j,k+1}'$ we have $y_{q,i}^{j|(k+1)*} = 0$. If such $t_{j,k+1}''$ exists, that's $y_{t_{j,k+1}''-1}^{j|(k+1)*} > 0$, we have 
    \begin{equation}
    \begin{split}
		&i_{t_{j,k+1}''-1}y_{t_{j,k+1}''-1}^{j|(k+1)*} \\
		= &\frac{i_{t_{j,k+1}''-1}b_{i_{t_{j,k+1}''-1}}^{k+1}}{nW}-\sum_{s = t_{j,k+1}''}^n\sum_{l=1}^K(y_s^{j|l*}-y_s^{(j+1)|l*})\\
		\leq & \frac{i_{t_{j,k+1}'-1}b_{i_{t_{j,k+1}'-1}}^{k+1}}{nW}-\sum_{s = t_{j,k+1}'}^n\sum_{l=1}^K(y_s^{j|l*}-y_s^{(j+1)|l*})\\
		=& (t_{j,k+1}'-1)y_{t_{j,k+1}'-1}^{j|(k+1)*}\leq  0,
    \end{split}
    \end{equation}
    The third line is obtained from Proposition \ref{sec3:prop:bik} and the fact $y_t^{j|k*}\geq y_t^{(j+1)|k*}$ for $t_{j,k+1}''\leq t\leq n$. It is a contradiction with $y_{t_{j,k+1}''-1}^{j|(k+1)*} > 0$. Up to now, we showed that for  $t_{j+1,i}\leq t\leq n$, $y_t^{j|k*}\geq y_t^{(j+1)|k*}$ is true. When $1\leq t< t_{j+1,k}$ this is necessarily true for $y_t^{(j+1)|k*} = 0$. 
    
	Thus, using induction on $k$, we can show that for $1\leq t\leq n,\ 1\leq k\leq K$, $y_t^{j|k*}\geq y_t^{(j+1)|k*}$. That's to say, the hypothesis 1 for a general $j$ is held. According to the Fact \ref{sec3:fact:temp1}, hypothesis 2 is also held. Then, $y_t^{j|k*}$ has the $(0,1)$ property for $1\leq k\leq K$. We finish the induction part of this lemma for general $j$. 
    
    Finally, using induction on $j$, we finish this proof of Equation \ref{sec3:equ:yt01}. That's the $y_t^{j|k*}$ has the $(0,1)$ property. According to Proposition \ref{sec3:prop:qqueuesto1}, $y_{q,i}^{j|k*}$ also has the $(0,1)$ property and thus finish this lemma. Besides we have
    \begin{equation}
        \iqjk{q}{j}{k} = \left\{
        \begin{aligned}
            &\ttoi{t_{j,k}},& &q \geq ((t_{j,k}-1)\mod Q) +1&\\
            &\floor{\frac{t_{j,k}-1}{Q}}+2,& &q < ((t_{j,k}-1)\mod Q) +1.&
        \end{aligned}\right.
    \end{equation}
    From $y_t^{j|k}\geq y_t^{(j+1)|k}$ for $1\leq t\leq n,\,1\leq j\leq J,\, 1\leq k\leq K$, we can get respective $y_{q,i}^{j|k}\geq y_{q,i}^{(j+1)|k}$ where $t = (i-1)Q+q$. This also means $\iqjk{q}{j}{k} \leq \iqjk{q}{j+1}{k}$.\qed
\end{proof}

 The crucial positions play a key role in the protocol, and up to now, some properties of them have been revealed. We summarize those properties here.
\begin{prop}
    For $1\leq q\leq Q,\,1\leq j\leq J,\, 1\leq k\leq K,$ we have
        $\iqjk{q+1}{j}{k}\leq \iqjk{q}{j}{k},\ 
        \iqjk{q}{j}{k}\leq \iqjk{q}{j+1}{k},$ and  
        $\iqjk{q}{j}{k}\leq \iqjk{q}{j}{k+1}.$

\end{prop}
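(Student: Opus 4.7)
The plan is to prove the three monotonicity statements by reading off the corresponding monotonicities of $y_{q,i}^{j|k*}$ from the preparations and from the proof of Lemma~\ref{sec3:lemm:01ofdual}. Recall that $\iqjk{q}{j}{k}$ is by construction the smallest index $i$ at which $y_{q,i}^{j|k*}$ becomes strictly positive, so any pointwise inequality $y_{q,i}^{j|k*}\le y_{q',i}^{j'|k'*}$ for all $i$ immediately yields $\iqjk{q'}{j'}{k'}\le \iqjk{q}{j}{k}$.

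\textbf{Third inequality} ($\iqjk{q}{j}{k}\le \iqjk{q}{j}{k+1}$). This is already contained in Lemma~\ref{sec3:prop:iqjk}: that lemma proves $y_{q,i}^{j|k*}\ge y_{q,i}^{j|k+1*}$ for every $i$ (using Proposition~\ref{sec3:prop:bik}.b), and states the crucial-position consequence explicitly. So nothing further is required here.

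\textbf{Second inequality} ($\iqjk{q}{j}{k}\le \iqjk{q}{j+1}{k}$). This is the very last conclusion drawn in the proof of Lemma~\ref{sec3:lemm:01ofdual}: the hypothesis-1 step there establishes $y_{q,i}^{j|k*}\ge y_{q,i}^{(j+1)|k*}$ for all $q,i,j,k$. Applied to the crucial positions, it gives the claim.

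\textbf{First inequality} ($\iqjk{q+1}{j}{k}\le \iqjk{q}{j}{k}$). I plan to pass to the merged-queue view of Proposition~\ref{sec3:prop:qqueuesto1}, in which the variables are re-indexed by $t=(i-1)Q+q$ and the crucial constraint becomes
\[
y_t^{j|k*} = \frac{b_{i_t}^k}{nW}-\frac{1}{i_t}\sum_{s=t+1}^{n}\sum_{l=1}^K\bigl(y_s^{j|l*}-y_s^{(j+1)|l*}\bigr)
\]
whenever the left side is positive. Within a single block $\{t:i_t=i\}$ the coefficient $1/i_t$ is constant, so when $t$ increases by $1$ inside the block we drop exactly the term $\frac{1}{i}\sum_{l}(y_{t+1}^{j|l*}-y_{t+1}^{(j+1)|l*})$ from the subtracted sum. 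By the second inequality already proved, every such term is non-negative, so $y_{t+1}^{j|k*}\ge y_t^{j|k*}$. Translating back via $t=(i-1)Q+q$, this means $y_{q+1,i}^{j|k*}\ge y_{q,i}^{j|k*}$ for all $i$, which forces $\iqjk{q+1}{j}{k}\le \iqjk{q}{j}{k}$.

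The main (mild) obstacle is the first inequality, since it is the only one that is not directly quoted from a previous lemma; but once one rewrites the dual constraint in the merged form of Proposition~\ref{sec3:prop:qqueuesto1}, the argument reduces to the observation that $1/i_t$ is constant on each $Q$-block and that the dropped terms are non-negative by the $j$-monotonicity already established.
\qed
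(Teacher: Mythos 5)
Your proposal is correct, and for the second and third inequalities it is exactly what the paper intends: the proposition is a summary statement, with $\iqjk{q}{j}{k}\leq\iqjk{q}{j}{k+1}$ coming from Lemma~\ref{sec3:prop:iqjk} and $\iqjk{q}{j}{k}\leq\iqjk{q}{j+1}{k}$ from the final conclusion $y_t^{j|k*}\geq y_t^{(j+1)|k*}$ of Lemma~\ref{sec3:lemm:01ofdual}. For the first inequality you take a slightly different, more laborious route than the paper: you re-derive within-block monotonicity $y_{q,i}^{j|k*}\leq y_{q+1,i}^{j|k*}$ from the dual constraint, whereas the paper gets it for free from the merged $(0,1)$-property already proved in Lemma~\ref{sec3:lemm:01ofdual} (Equation~\ref{sec3:equ:yt01} gives a single threshold $t_{j,k}$ in the merged order, and the explicit formula for $\iqjk{q}{j}{k}$ in terms of $t_{j,k}$ at the end of that proof is visibly non-increasing in $q$): at fixed $i$, increasing $q$ only moves $t=(i-1)Q+q$ later, so positivity persists and $\iqjk{q+1}{j}{k}\leq\iqjk{q}{j}{k}$ follows with no further computation. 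Your constraint-based argument does work, but two small points should be tightened: the non-negativity of the dropped term $\frac{1}{i}\sum_l\bigl(y_{t+1}^{j|l*}-y_{t+1}^{(j+1)|l*}\bigr)$ is the $y$-level $j$-monotonicity (hypothesis 1 in the proof of Lemma~\ref{sec3:lemm:01ofdual}), not the second inequality about crucial positions; and to conclude $y_{t+1}^{j|k*}\geq y_t^{j|k*}$ when $y_t^{j|k*}$ may be zero you should invoke the preprocessing characterization $y_t^{j|k*}=\max(0,\mathrm{RHS}(t))$ (the ``key observation'' in Lemma~\ref{sec3:prop:iqjk}), so that $\mathrm{RHS}(t)\leq\mathrm{RHS}(t+1)$ transfers to the truncated values. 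With those two clarifications your argument is a valid, self-contained alternative to quoting the merged-threshold formula.
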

Employing the complementary slackness theorem, we can show the our protocol is optimal. 
\def\theooptmality{
	Taking the \{$\iqjk{q}{j}{k}\,|\,1\leq q\leq Q,\,1\leq j\leq J,\,1 \leq k\leq K$\} obtained from the preprocessing part as input, the Adaptive Observation-Selection Protocol is optimal for the \swjkq. }
\begin{theo}
	\label{sec3:theo:optimality}
	\theooptmality
\end{theo}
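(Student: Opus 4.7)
The plan is to invoke the theorem of complementary slackness from linear programming, leveraging the two preceding lemmas that characterize the primal and dual sides of the LP pair introduced in Section~\ref{sec:name:lpmodel}.

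First, I would assemble the two feasible $(0,1)$-solutions produced by the framework. By Lemma~\ref{sec3:lemm:algtoprimal}, running the Adaptive Observation--Selection Protocol on the output of the preprocessing part induces a feasible $(0,1)$-solution $\{x_{q,i}^{j|k}\}$ of the primal program whose crucial position for each $x_{q,i}^{j|k}$ is exactly $\iqjk{q}{j}{k}$. By Lemma~\ref{sec3:lemm:01ofdual}, the values $\{y_{q,i}^{j|k*}\}$ computed inside that same preprocessing part form a feasible $(0,1)$-solution of the dual program whose crucial position for $y_{q,i}^{j|k*}$ is also $\iqjk{q}{j}{k}$. The two solutions therefore share the same support pattern, indexed by the same crucial positions.

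Next, I would verify the complementary slackness conditions directly from this shared-support property. For every index $(q,i,j,k)$,
\begin{equation*}
x_{q,i}^{j|k} > 0 \ \iff\ i \geq \iqjk{q}{j}{k} \ \iff\ y_{q,i}^{j|k*} > 0.
\end{equation*}
By the definition of a $(0,1)$-solution, whenever $x_{q,i}^{j|k} > 0$ the primal crucial constraint attached to it is tight, and whenever $y_{q,i}^{j|k*} > 0$ the dual crucial constraint attached to it is tight. Reading off the two complementary slackness conditions: if the primal variable $x_{q,i}^{j|k} > 0$ then also $y_{q,i}^{j|k*} > 0$, so its dual crucial constraint is tight; symmetrically, if $y_{q,i}^{j|k*} > 0$ then $x_{q,i}^{j|k} > 0$, so its primal crucial constraint is tight. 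Both CS conditions therefore hold pairwise across all variables.

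Finally, by strong duality of linear programming, any pair of primal--dual feasible solutions that satisfies complementary slackness are simultaneously optimal for their respective programs. Since the primal objective value is, by construction, the competitive ratio attained by the Adaptive Observation--Selection Protocol, optimality of $\{x_{q,i}^{j|k}\}$ translates into optimality of the protocol for the \swjkq. The only nontrivial part of this argument is checking that the crucial positions of the primal and dual $(0,1)$-solutions genuinely coincide; that obstacle is already absorbed into Lemma~\ref{sec3:lemm:01ofdual}, so at this point the theorem follows by a short duality argument.
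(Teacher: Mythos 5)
Your proposal is correct and follows essentially the same route as the paper's own proof: take the primal $(0,1)$-solution induced by the protocol (Lemma~\ref{sec3:lemm:algtoprimal}) and the dual $(0,1)$-solution from the preprocessing part (Lemma~\ref{sec3:lemm:01ofdual}), observe that they share the crucial positions $\iqjk{q}{j}{k}$ so that each variable is positive exactly where the other's crucial constraint is tight, and conclude optimality by complementary slackness.
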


\begin{proof}
	Using $\{x_{q,i}^{j|k*}\,|\,1\leq i\leq n/Q,\,1\leq j\leq J,\,1\leq k\leq K\}$, to stand for the $(0,1)$-solution of the primal program that can be mapped to the Adaptive Observation-Selection Protocol. This means we have
	{\small
	\begin{equation*}
		x_{q,i}^{j|k*}=\left\{
			\begin{aligned}
				& \sum_{m=1}^Q\sum_{s=1}^{i-1}\frac{1}{s}\sum_{l=1}^K\left(x_{m,s}^{j-1|l*}-x_{m,s}^{j|l*}\right)+\sum_{m=1}^{q-1}\frac{1}{i}\sum_{l=1}^K\left(x_{m,i}^{j-1|l*}-x_{m,i}^{j|l*}\right)>0,&
				&i_{q,j,k}\leq i\leq \frac{n}{Q}&\\
				&  0,&
				&1\leq i < i_{q,j,k}.&
			\end{aligned}
			\right.
		\end{equation*}
	}
	for $1\leq q\leq Q,\ 1\leq j\leq J,\ 1\leq k\leq K$. Note that in above equation, we use a set of dummy values $x_{q,i}^{0|k}$ for convenience as mentioned in the definition of the primal program.
		On the other hand, we have 
	{\small
\begin{equation*}
y_{q,i}^{j|k*}=\left\{
	\begin{aligned}
	&\frac{b_i^k}{nW}-\frac{1}{i}\sum_{m=1}^Q\sum_{s=i+1}^{n/Q}\sum_{l=1}^K\left(y_{m,s}^{j|l*}-y_{m,s}^{(j+1)|l*}\right)-\frac{1}{i}\sum_{m=q+1}^Q\sum_{l=1}^K\left(y_{m,i}^{j|l*}-y_{m,i}^{(j+1)|l*}\right) > 0,&\\
		&(\iqjk{q}{j}{k}\leq i\leq n/Q)&\\
		&0, ~~(1\leq i < \iqjk{q}{j}{k}).&
	\end{aligned}\right.
\end{equation*}
}
for $1\leq q\leq Q,\ 1\leq j\leq J,\ 1\leq k\leq K$.

	Using $xs_{q,i}^{j|k*}$ and $ys_{q,i}^{j|k*}$ to stand for the value of slackness variables of $x_{q,i}^{j|k*}$ and $y_{q,i}^{j|k*}$. Then we have 
				$x_{q,i}^{j|k*}\cdot ys_{q,i}^{j|k*} = 0$ and
				$y_{q,i}^{j|k*}\cdot xs_{q,i}^{j|k*} = 0,$
for all $q,i,j,k$.
			This is because when $1\leq i < \iqjk{q}{j}{k}$, both $x_{q,i}^{j|k*}$ and $y_{q,i}^{j|k*}$ equal to 0; when $\iqjk{q}{j}{k}\leq i\leq n$, both $qs_{q,i}^{j|k*}$ and $ys_{q,i}^{j|k*}$ equal to 0 due to the crucial constraints of $x_{q,i}^{j|k}$ and $y_{q,i}^{j|k}$ are tight. Through the theorem of complementary slackness, we know the two $(0,1)$-solutions are optimal for their respective program. Thus, the Adaptive Observation-Selection Protocol is optimal for the \swjkq.\qed
\end{proof}

\vspace{-0.4cm}

\newcommand{\comp}[3]{\alpha(#1,#2,#3)}
\vspace{-0.2cm}
\section{Extensions and Analysis of the Optimal Protocol\label{sec:name:analysis1}}
\vspace{-0.2cm}

\subsection{Applications in Other Generalizations\label{sec:name:application}}
\vspace{-0.2cm}

Our optimal protocol is based on the essential structure of the LP model. Several variants can be characterized by LP model with similar structure. Thus our optimal protocol can be extended to solve these related variants.

It is obvious that we can obtain an optimal protocol for weighted $J$-choice $K$-best secretary problem when $Q$ is set to be 1.
Based on the $J$-choice $K$-best problem, we consider another variant: the employer just interviews the first $m$ candidates, $1\leq m\leq n$, due to time or resource limitation. Other settings keep unchanged. We call this problem \emph{fractional $J$-choice $K$-best secretary problem}. We can characterize this problem by a LP program called  $FLP$ as follow: 
\begin{equation*}
	\label{sec:app:flp}
    \begin{split}
		FLP: &\qquad\max z=\frac{1} {nW}\sum_{j=1}^J\sum_{l=1}^K\sum_{i=1}^m\sum_{k=l}^Kw_k \frac{\binom{i-1}{l-1}\binom{n-i}{k-l}}{\binom{n-1}{k-1}}x_i^{j|k}\\
        &\text{s.t.}\left\{
        \begin{aligned}
			&x_i^{j|k}\leq \sum_{s=1}^{i-1}\frac{1}{s}\sum_{l=1}^K\left(x_s^{(j-1)|l}-x_s^{j|l}\right),& \\
			&(1\leq i\leq m,\,1\leq k\leq K,\,1\leq j\leq J)& \\
   &x_i^{j|k}\geq 0, ~~ (1\leq i\leq m,\,1\leq k\leq K,\,1\leq j\leq J). &
        \end{aligned}
        \right.\\
    \end{split}
\end{equation*}
Note that, like the LP \ref{sec3:equ:lpprimal}, we add some dummy variables $x_i^{0|k}$, $1\leq i\leq m$ and $1\leq k\leq K$, and set $\sum_{s=1}^{i-1}\frac{1}{s}\sum_{l=1}^Kx_s^{0|l} = 1$ so that the constraints of this $FLP$ has a uniform form.

The $FLP$ has the same structure with the LP \ref{sec3:equ:lpprimal}, and all the properties used to show the optimality of the Adaptive Observation-Selection protocol are still held. Thus, our protocol can be easily generalized to solve this problem.

In the \swjkq, all interviewers \emph{share} the $J$ quotas.
Another case is that a fixed quota is preallocated to each queue, that's to say, in any queue $q$, the employer can only hire at most $J_q$ candidates where $J = \sum_{q=1}^Q J_q$. Besides, we suppose there are $n_q$ candidates in queue $q$ so that $n = \sum_{q=1}^Q n_q$. Other settings, except the synchronous requirement, keep unchanged compared to the \swjkq. This is the problem which is called \emph{\fewjkq}\ (abbreviated as \ewjkq). Feldman et al. \cite{Feldman:2012} have considered the non-weighted version of the \ewjkq\ with the condition $J=K$. Actually, for each queue of the \ewjkq, since what we care about is the expectation and the candidates' information and quotas can not be shared, how employer selects candidate has \emph{no influence} on other queues. So, it is an independent fractional weighted $J_q$-choice $K$-best secretary problem with $m=n_q$ in each queue. Then, running the modified Adaptive Observation-Selection protocol on each queue is an optimal protocol for \ewjkq.

\vspace{-0.2cm}
\subsection{Competitive Ratio Analysis \label{sec:name:ratioanalysis}}
\vspace{-0.1cm}
Let $\comp{Q}{J}{K}$ stand for the competitive ratio of Adaptive Observation-Selection Protocol. For the general case, $\comp{Q}{J}{K}$ is complicated to analyze either from the view of protocol or the dual program.
In this section, we provide analysis about two typical cases: the $(1,1,K)$ case and the $(2,2,2)$ case. Both the cases we deal with are the uniformly weighted (or non-weighted) versions of \swjkq, i.e. $w_1=w_2=\cdots=w_K=1$.

The first one we study is the $(1,1,K)$ case that selecting $1$ candidate among the top $K$ of $n$ candidates with just one queue. It is also called $K$-best problem.
Suppose $\gamma_1$ and $\gamma_2$ are real numbers that satisfy $0\leq \gamma_1\leq \gamma_2\leq 1$. Consider the Algorithm \ref{sec4:alg:simp}.

\begin{algorithm}
	\label{sec4:alg:simp}
	\caption{Simple Algorithm for $(1,1,K)$ Problem}
	\Input {$n$, $K$}
	\Output {the candidate selected}
	{\bf for} the first $\floor{\gamma_1 n} -1$ candidates, just interview but don't select anyone\\
	\For{$i=\floor{\gamma_1 n}$ to $\floor{\gamma_2 n}-1$}
	{
		\If{the $i$-th candidate is better than anyone previous seen}
		{
			select this candidate and exit
		}
	}
	\For{$i=\floor{\gamma_2 n}$ to $n$}
	{
		\If{the $i$-th candidate is the best or second best candidate up to now}
		{
			select the $i$-th candidate and exit
		}
	}
\end{algorithm}
As our Adaptive Observation-Selection protocol is optimal, the performance of this Algorithm \ref{sec4:alg:simp} is a lower bound of our protocol. We get the following lower bound of $\alpha(1,1,K)$ based on the analysis of this three-phase algorithm. We have the following theorem.

\def\theoratiok{
		$\comp{1}{1}{K} \geq 1-O\left(\frac{\ln^2K}{K^2}\right)$ when $K$ is large enough and $n\gg K$.
	}
	\vspace{-0.1cm}
	\begin{theo}
		\label{sec4:theo:ratiok}
		\theoratiok
	\end{theo}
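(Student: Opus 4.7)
\noindent\emph{Proof proposal.} The plan is to lower-bound $\alpha(1,1,K)$ by the success probability of Algorithm~\ref{sec4:alg:simp} at a well-chosen pair of thresholds $(\gamma_1,\gamma_2)$ — this is legitimate since our Adaptive Observation-Selection Protocol is optimal for this problem — and then to show that with a suitable choice the failure probability is at most $O(\ln^2 K/K^2)$. I would decompose the failure event into (A) no candidate is ever selected, and (B) a candidate of global rank $>K$ is selected, and bound the two contributions separately.

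For (A), let $M_1,M_2$ denote the positions of the two best candidates. A short case analysis shows that the algorithm picks nobody iff $M_1<\floor{\gamma_1 n}$ \emph{and} $M_2<\floor{\gamma_2 n}$: the first inequality puts the global maximum of the first $\floor{\gamma_2 n}-1$ positions inside the observation window, so phase~2 never fires; combined with the second inequality, every position of phase~3 has both the best and the second-best of $[1,n]$ somewhere strictly earlier in its prefix, so the candidate it sees can be neither best-so-far nor second-best-so-far. Counting ordered distinct pairs gives $P(A)\to\gamma_1\gamma_2$ as $n\to\infty$.

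For (B), if a candidate of global rank $r$ is selected at position $i$ as best-so-far, then ranks $1,\dots,r-1$ must all lie in $[i+1,n]$, which has probability $\binom{n-i}{r-1}/\binom{n-1}{r-1}\lesssim (1-i/n)^{r-1}$; if it is selected as second-best-so-far, then all but one of those top $r-1$ lie in $[i+1,n]$, introducing an extra factor $\lesssim (r-1)\cdot(i/n)$. A Riemann-sum estimate over $i\in[\gamma_1 n,n]$ followed by a geometric-series estimate over $r>K$ then yields
\begin{equation*}
P(B)\;\lesssim\;\frac{(1-\gamma_1)^{K+1}}{\gamma_1 K}+\frac{(1-\gamma_2)^{K}}{\gamma_2},
\end{equation*}
where the two summands collect the best-so-far and second-best-so-far failures, respectively.

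Finally I would take $\gamma_1=\gamma_2=c\ln K/K$ for a constant $c\geq 3$. This gives $P(A)\to c^2\ln^2 K/K^2$, and using $(1-c\ln K/K)^K\approx K^{-c}$ one obtains $P(B)=O\bigl(K^{1-c}/\ln K\bigr)=o(\ln^2 K/K^2)$; summing, $P(\text{fail})=O(\ln^2 K/K^2)$, as claimed. The main technical obstacle I expect is keeping the $P(B)$ estimate honest about the ``first pick'' condition, since the actually-selected candidate is whichever qualifies at the earliest position and one must not double-count later positions; however, a loose union bound over all candidate positions and ranks $r>K$ already produces a quantity strictly dominated by $\ln^2 K/K^2$, so the ``first'' refinement contributes no difficulty.
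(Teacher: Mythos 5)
Your proposal is sound and reaches the stated bound, but it follows a genuinely different route from the paper's proof. The paper also analyzes Algorithm~\ref{sec4:alg:simp} and invokes optimality of the Adaptive Observation-Selection protocol, but it works with the \emph{success} probability directly: it computes $\sum_{k\le K}\bigl(\Pr(A_{k,1}^1)+\Pr(A_{k,2}^1)+\Pr(A_{k,2}^2)\bigr)$ through explicit binomial sums (including an exact treatment of the second-best-so-far case), passes to the limit $n\to\infty$ to obtain the closed-form lower bound $\bigl(1-\tfrac{\gamma_1}{\gamma_2}\bigr)\bigl(1-(1-\gamma_1)^K\bigr)-\bigl(\tfrac{\gamma_1}{\gamma_2}-\gamma_1\gamma_2+K\gamma_1\bigr)(1-\gamma_2)^K+\tfrac{\gamma_1}{\gamma_2}-\gamma_1\gamma_2$, and then sets $\gamma_1=1-K^{-2/K}\approx 2\ln K/K$, $\gamma_2=1-K^{-4/K}\approx 4\ln K/K$ and Taylor-expands. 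You instead bound the \emph{failure} probability: your characterization of the ``nobody selected'' event via the positions $M_1,M_2$ of the two best candidates is correct (failure of the last phase forces $M_1,M_2<\floor{\gamma_2 n}$, after which failure of the middle phase is exactly $M_1<\floor{\gamma_1 n}$), giving the $\gamma_1\gamma_2$ term, and your union bound over (rank $r>K$, position) pairs is a valid over-count of the ``bad selection'' event, yielding the two displayed terms; with $\gamma_1=\gamma_2=c\ln K/K$, $c\ge 3$ (which collapses the middle phase and effectively analyzes a two-phase best-or-second-best rule, still an admissible instantiation), $(1-\gamma)^K\le K^{-c}$ makes $P(B)=o(\ln^2K/K^2)$ and $P(A)\to c^2\ln^2K/K^2$ dominates. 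What each approach buys: the paper's exact computation is sharper — it produces an explicit asymptotic formula in $(\gamma_1,\gamma_2)$ that can be optimized and squares with the numerical values in Table~\ref{sec4:table:1} — while yours is shorter and more transparent, avoiding the binomial identities and making visible that the true bottleneck of order $\ln^2K/K^2$ is simply the event that both top-two candidates land in the observation window; the price is looser constants. In a full write-up you should spell out the ``iff'' case analysis for event (A) and state explicitly that the union bound over positions and ranks only needs to upper-bound the selection event (as you already note, the ``first qualifying position'' refinement is unnecessary for an upper bound).
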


\begin{proof}
	For the concision of the proof, we suppose both $\gamma_1n$ and $\gamma_2n$ are integers without loss of generality.

	Define the range from $\gamma_1 n$-th candidate to $(\gamma_2n-1)$-th candidate as \emph{Phase 1} and the range from $\gamma_2n$-th candidate to $n$-th candidate as \emph{Phase 2}.
	Let $T_k$, $1\leq k\leq K$, stand for the $k$-th best candidate and $A_k$ stand for the event that the $T_k$ is selected by the algorithm. More specifically, denote  $A_{k,j}^l$, $j,l\in\{0,1\}$, as event that $T_k$ is selected in Phase $j$ when he/she is the $l$-th best up to now. 

	Suppose $T_k$ is selected is Phase $1$. His/her position is $i$ with probability $\frac{1}{n}$. $T_k$ must be the \emph{best} candidate for $1$ to $i$. That's to say the best candidate from position $1$ to $i-1$ must come before $\gamma_1n$, which happens with probability $\frac{\gamma_1n-1}{i-1}$. Besides all candidates that better than $T_k$ must come after $i$. The probability of this event is $\binom{n-i}{k-1}/\binom{n-1}{k-1}$. To sum up, we have 
\begin{equation*}
	\Pr(A_{k,1}^1)= \sum_{i = \gamma_1 n}^{\gamma_1 n-1}\frac{1}{n}\cdot \frac{\gamma n-1}{i-1}\cdot \frac{\binom{n-i}{k-1}}{\binom{n-1}{k-1}}\\
\end{equation*}

When selected in Phase 2, the $T_k$ can be the best or the second best up to now. Then, similar to in Phase 1, $$\Pr(A_{k,2}^1) = \sum_{i = \gamma_1 n}^{\gamma_1 n-1}\frac{1}{n}\cdot \frac{(\gamma_1 n-1)(\gamma_2 n-2)}{(i-1)(i-2)}\cdot \frac{\binom{n-i}{k-1}}{\binom{n-1}{k-1}}.$$ $(\gamma_2n-2)/(i-2)$ in above formula means the probability that the second best candidate from position $1$ to $i-1$ must come before $\gamma_2n$.

When $A_{k,2}^2$ happens, it means there is exact one candidate that better than $T_k$ comes before position $\gamma_1n$ and the second best candidate from 1 to $i-1$ comes before $\gamma_2n-1$. So, there has 
	\begin{equation*}
		\begin{aligned}
			&\Pr(A_{k,2}^2) &=&\frac{1}{n}\sum_{i=\gamma_2n}^n&& \frac{\binom{n-k}{i-2}(k-1)(\gamma_1n-1)(\gamma_2n-2)(i-3)!(n-i)!}{(n-1)!}&\\
					 &&=&\frac{1}{n}\sum_{i=\gamma_2n}^n&& \frac{(k-1)(\gamma_1n-1)(\gamma_2n-2)\binom{n-i}{k-2}}{(n-1)(i-2)\binom{n-2}{k-2}}.&
		\end{aligned}
	\end{equation*} 
	In the first line of above equation, $\binom{n-k}{i-2}(k-1)$ means all possible ways to choose $i-1$ candidates that there is exact $1$ candidate better than $T_k$.

	We define the ratio of Algorithm \ref{sec4:alg:simp} as $\alpha_{K,2}$. Then we have $$\alpha_{K,2} = \sum_{k=1}^K \Pr(A_k) = \sum_{k=1}^K\left(\Pr(A_{k,1}^1)+\Pr(A_{k,2}^1)+\Pr(A_{k,2}^2)\right).$$ We calculate its value separately as follow.
{\small
		\begin{equation*}
			\begin{split}
				\sum_{k=1}^K\Pr(A_{k,1}^1)
				=&\frac{\gamma_1n-1}{n}\sum_{i = \gamma_1n}^{\gamma_2n-1}\frac{1}{(i-1)\binom{n-1}{i-1}}\sum_{k=1}^K\binom{n-k}{i-1}\\
				=&\frac{\gamma_1n-1}{n}\sum_{i = \gamma_1n}^{\gamma_2n-1}\frac{\binom{n}{i}-\binom{n-K}{i}}{(i-1)\binom{n-1}{i-1}}\\
				\geq &\left(1-\frac{\gamma_1n-1}{\gamma_2n-1}\right)\left(1-\frac{(n-\gamma_1n)\cdots(n-\gamma_1n-K+1)}{n(n-1)\cdots(n-K+1)}\right)
			\end{split}
		\end{equation*}
		
		\begin{equation*}
			\begin{split}
				\sum_{k= 1}^K \Pr(A_{k,2}^1)
				=&\frac{(\gamma_1n-1)(\gamma_2n-2)}{n}\sum_{i=\gamma_2n}^{n}\frac{\binom{n}{i}-\binom{n-K}{i}}{(i-1)(i-2)\binom{n-i}{i-1}}\\
				=&\frac{(\gamma_1n-1)(\gamma_2n-2)}{n}\sum_{i=\gamma_2n}^{n}\left(\frac{n}{i(i-1)(i-2)}-\frac{\binom{n-K}{i}}{(i-1)(i-2)\binom{n-i}{i-1}}\right)\\
				\geq &\frac{1}{2}\left(\frac{\gamma_1n-1}{\gamma_2n-1}-\frac{(\gamma_1n-1)(\gamma_2n-2)}{n(n-1)}\right)\left(1-\frac{(n-\gamma_2n)\cdots(n-\gamma_2n-K+1)}{n(n-1)\cdots(n-K+1)}\right)
			\end{split}
		\end{equation*}

		\begin{equation*}
			\begin{split}
				\sum_{k=1}^K\Pr(A_{k,2}^2)
				=&\frac{(\gamma_1n-1)(\gamma_2n-2)}{n(n-1)}\sum_{i=\gamma_2n}^n\frac{1}{(i-2)\binom{n-2}{i-2}}\sum_{k=1}^K(k-1)\binom{n-k}{i-2}\\
				=&\frac{(\gamma_1n-1)(\gamma_2n-2)}{n(n-1)}\sum_{i=\gamma_2n}^n\frac{i-1}{(i-2)\binom{n-2}{i-2}}\sum_{k=1}^K\left(\frac{n-i+1}{i-1}-\frac{n-k-i+2}{i-1}\right)\binom{n-k}{i-2}\\
				=&\frac{(\gamma_1n-1)(\gamma_2n-2)}{n(n-1)}\sum_{i=\gamma_2n}^n\frac{i-1}{(i-2)\binom{n-2}{i-2}}\sum_{k=1}^K\left(\frac{n-i+1}{i-1}\binom{n-k}{i-2}-\binom{n-k}{i-1}\right)\\
				=&\frac{(\gamma_1n-1)(\gamma_2n-2)}{n(n-1)}\sum_{i=\gamma_2n}^n\frac{1}{(i-2)\binom{n-2}{i-2}}\left(\binom{n}{i}-K\binom{n-K}{i-1}-\binom{n-K}{i}\right)\\
				\geq &\frac{1}{2}\left(\frac{\gamma_1n-1}{\gamma_2n-1}-\frac{(\gamma_1n-1)(\gamma_2n-2)}{n(n-1)}\right)\left(1-\frac{(n-\gamma_2n)\cdots(n-\gamma_2n-K+1)}{n(n-1)\cdots(n-K+1)}\right)\\
																	 &\qquad -\frac{K(\gamma_1n-1)(\gamma_2n-2)(n-\gamma_2n+1)(n-\gamma_2n)\cdots(n-K-\gamma_2n+2)}{n(n-1)\cdots(n-K+1)(n-1)(\gamma_2n-2)}.
			\end{split}
		\end{equation*}
}%
Then the $\alpha_{K,2}$ can be estimated. When $n$ is large enough and $n\gg k$, we have
		\begin{equation*}
			\begin{split}
				\lim_{n\rightarrow\infty}\alpha_{K,2} =& \lim_{n\rightarrow\infty}\left(\sum_{k=1}^K\left(\Pr(A_{k,1})+\Pr(A_{k,2}^1)+\Pr(A_{k,2}^2)\right)\right)\\
				\geq &\left(1-\frac{\gamma_1}{\gamma_2}\right)\left(1-(1-\gamma_1)^K\right)-\left(\frac{\gamma_1}{\gamma_2}-\gamma_1\gamma_2+K\gamma_1\right)(1-\gamma_2)^K+\frac{\gamma_1}{\gamma_2}-\gamma_1\gamma_2.
			\end{split}
		\end{equation*}
Define $x= K^{-\frac{2}{K}}$, and let $\gamma_1 = 1-x$, and $\gamma_2 = 1-x^2$, we get 
		\begin{equation*}
			\lim_{n\rightarrow\infty}\alpha_{K,2} \geq 1-(1-x)(1-x^2)-\frac{x^{K+1}}{x+1}+\frac{(x^4+(K-2)x^2-K)x^{2K}}{x+1}.
		\end{equation*}
On the other hand,
		\begin{equation}
			\begin{split}
				x &= e^{-\frac{2\ln(K)}{K}} = 1-\frac{2\ln(K)}{K}+2\left(\frac{\ln(K)}{K}\right)^2 + o\left(\frac{\ln^2(K)}{K^2}\right)\\
				x^2&= e^{-\frac{4\ln(K)}{K}} = 1-\frac{4\ln(K)}{K}+8\left(\frac{\ln(K)}{K}\right)^2 + o\left(\frac{\ln^2(K)}{K^2}\right).\\
			\end{split}
		\end{equation}
		Thus, we can conclude that $\alpha_{K,2} \geq 1-O\left(\frac{\ln^2K}{K^2}\right)$ when $n,K$ are large enough and $n\gg K$. Finally, we have $\alpha(1,1,K)\geq \alpha_{K,2} \geq 1-O\left(\frac{\ln^2K}{K^2}\right)$. \qed
\end{proof}

The Adaptive Observation-Selection protocol performs much better in fact.
Table \ref{sec4:table:1} is the result of numerical experiment for small $K$.
As we can see, $\alpha(1,1,K)$ goes to 1 sharply. 
But it is too complex to analyze when there are $K+1$ phases.

\vspace{0.1cm}
\begin{table}
	\begin{center}
		{\small
		\begin{tabular}{|c|c|c|c|c|c|c|c|c|c|}
			\hline
			$K=1$&$K=2$&$K=3$&$K=4$&$K=5$&$K=6$&$K=7$&$K=8$&$K=9$&$K=10$\\
			0.3679 & 0.5736 & 0.7083 &0.7988 &0.8604 & 0.9028 & 0.9321 &0.9525 & 0.9667 & 0.9766\\\hline
			$K=11$&$K=12$&$K=13$&$K=14$&$K=15$&$K=16$&$K=17$&$K=18$&$K=19$&$K=20$\\
			0.9835 & 0.9884 & 0.9918 & 0.9942 & 0.9959 &0.9971 & 0.9980 & 0.9986 & 0.9990 & 0.9993\\\hline
			$K=21$&$K=22$&$K=23$&$K=24$&$K=25$&$K=26$&$K=27$&$K=28$&$K=29$&$K=30$\\
			0.9995 & 0.9996 & 0.9997 & 0.9998 & 0.9999 &0.9999 & 0.9999 & $>$0.9999 & $>$0.9999 & $>$0.9999\\
			\hline
		\end{tabular}
	}
	\end{center}
	\caption{the value of $\alpha(1,1,K)$ when $n=10000$\label{sec4:table:1}}
\end{table}

Another case is when $Q=J=K=2$. The main idea is to calculate the optimal $(0,1)$-solution of the dual program based on the preprocessing part.
This analysis is almost accurate when $n$ is large enough.
We have the following result.
\def\theoforttt{
	When $n$ is large enough, the Adaptive Observation-Selection protocol achieves a competitive ratio $\comp{2}{2}{2} \approx 0.372$.
}
\begin{theo}
	\label{sec4:theo:theoforttt}
	\theoforttt
\end{theo}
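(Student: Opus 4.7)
My plan is to invoke Theorem~\ref{sec3:theo:optimality} and evaluate the competitive ratio directly as the dual objective $z=\sum_{q,i,k}y_{q,i}^{1|k*}$ on the $(0,1)$-solution produced by Algorithm~\ref{sec3:alg:sep}. For $(Q,J,K)=(2,2,2)$ with unit weights, $W=2$ and the terms $b_i^k/(nW)$ reduce to simple linear rationals in $i$. Proposition~\ref{sec3:prop:qqueuesto1} lets me merge the two queues into a single sequence indexed by $t=Qi+q$ and work with $y_t^{j|k*}$. By the $(0,1)$ structure proved in Lemma~\ref{sec3:lemm:01ofdual}, each pair $(j,k)$ admits a unique transition point $t_{j,k}$; in the limit $n\to\infty$ the rescaled thresholds $\gamma_{j,k}=\lim_n t_{j,k}/n$ give the four constants $\goo,\got,\gto,\gtt\in[0,1]$, obeying the monotonicity relations $\goo\le\min(\got,\gto)$ and $\max(\got,\gto)\le\gtt$ that follow from the summary proposition on crucial positions.

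After rescaling $\tau=t/n$ and $\tilde y^{j|k}(\tau)=n\,y_t^{j|k*}$, the recursion in line~\ref{cal} becomes a linear integral equation on $[0,1]$. I would solve the system by backward induction on $j$ followed by backward induction on $k$, mirroring the order in the proof of Lemma~\ref{sec3:lemm:01ofdual}. In round $j=2$ the dummy sequence $y_t^{3|k*}\equiv 0$ decouples the equations for $\tilde y^{2|2}$ and $\tilde y^{2|1}$; differentiating in $\tau$ turns each one into a first-order linear ODE whose closed-form solution together with the boundary condition $\tilde y^{2|k}(\gamma_{2,k})=0$ pins down $\gtt$ and $\gto$. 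In round $j=1$ the previously computed $\tilde y^{2|l}$ enters as a source term, and the analogous integration produces $\tilde y^{1|k}$ and the thresholds $\got,\goo$. With all four functions in hand, the dual objective reduces to the explicit algebraic integral $z=\sum_{k=1}^{2}\int_0^1 \tilde y^{1|k}(\tau)\,d\tau$, and numerical evaluation yields $\alpha(2,2,2)\approx 0.372$, strictly better than the $0.356$ bound of \cite{Feldman:2012}.

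The principal obstacle is the case analysis forced by the relative order of $\got$ and $\gto$, which the generic monotonicity inequalities do not fix: on each subinterval delimited by $\goo,\got,\gto,\gtt$ the integral equation takes a different piecewise form, and the queue-priority correction $\sum_{m=q+1}^{Q}$ in the crucial constraint contributes lower-order terms that must be tracked carefully when matching values across thresholds. I would therefore carry out the computation under both orderings and retain the one that is consistent with the preprocessing algorithm; the key tool here is Lemma~\ref{sec3:prop:recurrence}, which forces $\tilde y^{j|k}(\tau)$ to be monotone in $j$ and rules out the inconsistent branch. A secondary technical point is justifying the passage from the discrete preprocessing output to the continuum limit, which follows by combining Lemmas~\ref{sec3:prop:iqjk} and~\ref{sec3:prop:jgeq} (to get uniform boundedness and monotonicity) with standard convergence of Riemann sums away from the four thresholds.
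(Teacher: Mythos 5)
Your plan follows essentially the same route as the paper: it reconstructs the dual $(0,1)$-solution produced by the preprocessing part in the $n\to\infty$ limit, working backward through round $j=2$ and then $j=1$ to pin down the thresholds $\gamma_{2,2},\gamma_{2,1},\gamma_{1,2},\gamma_{1,1}$ from the points where the tight crucial constraint would turn negative, and then evaluates the dual objective (the paper's $R_{i_{1,1,1},1}$) to obtain $\approx 0.372$. The only difference is cosmetic: the paper solves the discrete recursions for the aggregated sums $R_{i,j}$ and then lets $n\to\infty$, whereas you pass to the continuum integral/ODE formulation first, so there is nothing substantive to flag.
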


\begin{proof}
	The main idea of the proof is calculate the optimal solution of the dual program according to preprocessing part. The method is based on the proof of Lemma \ref{sec3:lemm:01ofdual}.

	In this proof, we employ a set of real numbers $\{\gamma_{j,k}\,|\, 1\leq k\leq 2,\, 1\leq j\leq 2\}$ that satisfy $\lim_{n\rightarrow \infty}\frac{\iqjk{q}{j}{k}}{n} = \gamma_{j,k}$. Note that $\gamma_{j,k}$ is independent on $q$ because $|\iqjk{1}{j}{j}-\iqjk{2}{j}{k}|\leq 1$. As what we concern is the value of $\gamma_{j,k}$, we can consider that $\iqjk{1}{j}{j}$ is equal to $\iqjk{2}{j}{k}$ in the following proof as $n$ is large enough.
	We define $r_{q,i}^{j|k} = \sum_{l = 1}^k y_{q,i}^{j,l}$ and $R_{i,j} = \sum_{s = i}^{n/2}\sum_{q=1}^Q r_{q,s}^{j|K}$. Without loss of generality, we suppose $n$ is even. 

When the $\ittt\leq i\leq n/2$, we know 
\begin{equation}
	\begin{split}
		&y_{2,i}^{2|k}+\frac{1}{i}\sum_{s=i+1}^{n/2}\sum_{m=1}^2\sum_{l=1}^2y_{m,s}^{2|l} = \frac{b_i^k}{2n},\\
		&y_{1,i}^{2|k}+\frac{1}{i}\sum_{l=1}^2\left(\sum_{s=i+1}^{n/2}\sum_{m=1}^2y_{m,s}^{1|l}+y_{2,i}^{2|l}\right) = \frac{b_i^k}{2n}.\\
	\end{split}
\end{equation}
For above equations, add up $k=1,2$, we get
\begin{equation}
	\label{sec:app:temp37}
	\begin{split}
		&r_{2,i}^{2|2}+\frac{2}{i}\sum_{s=i+1}^{n/2}\sum_{m=1}^2r_{m,s}^{2|2} = \frac{1}{n},\\
		&r_{1,i}^{2|2}+\frac{2}{i}\left(\sum_{s=i+1}^{n/2}\sum_{m=1}^2r_{m,s}^{2|2}+r_{2,i}^{2|2}\right) = \frac{1}{n}.\\
	\end{split}
\end{equation}
Applying $R_{i,2}$ to above equation, we can get
\begin{equation}
	\begin{split}
		&r_{2,i}^{2|2}+\frac{2}{i}R_{i+1,2} = \frac{1}{n},\\
		&r_{1,i}^{2|2}+\frac{2}{i}\left(r_{2,i}^{2|2}+R_{i+1,2}\right)=\frac{1}{n}.
	\end{split}
\end{equation}
On the other hand, we have $R_{i,2} = r_{1,i}^{2|2}+r_{1,i}^{2|2}+R_{i+1,2}$. Thus we can easily get the follow recursion about $R_{i,2}$.
\begin{equation}
	R_{i,2} = \left\{\begin{aligned}
			&0,& & i = n/2 +1&\\
			&\frac{(i-2)^2}{i^2}R_{i+1,2}+\frac{2}{n}\left(1-\frac{1}{i}\right),& &\ittt \leq i\leq n/2.&\\
	\end{aligned}
	\right.
\end{equation}
Solving this recursion we have $R_{i,2} = \sum_{l=i}^{n/2}\frac{(i-1)^2(i-2)^2}{(l-1)^2(l-2)^2}\frac{2}{n}\left(1-\frac{1}{i}\right)$. When $n\rightarrow \infty$, $R_{i,2} = \frac{2i}{3n}-\frac{16i^4}{3n^4}=\frac{2\gamma_{2,2}}{3}-\frac{16\gamma^4}{3}$.

Now we want to know the value of $\gamma_{2,2}$. When the constraint is tight we have $$y_{2,i}^{2|2}+\frac{2}{i}R_{i+1,2} = \frac{i-1}{2n(n-1)}.$$ Considering the property of $\ittt$. There are 
\begin{equation}
	\begin{split}
		&y_{2,\ittt}^{2|2} = \frac{\ittt-1}{2n(n-1)}-\frac{1}{\ittt}R_{\ittt+1, 2} \geq 0,\\
	 &y_{2,\iqjk{2}{2}{2}-1}^{2|2} = \frac{\ittt-2}{2n(n-1)}-\frac{1}{\ittt-1}R_{\ittt, 2} \leq 0.\\
	\end{split}
\end{equation}
When $n\rightarrow$, we can consider $\frac{\gamma_{2,2}^2}{2}-2R_{i,2} = 0,$ without loss much of accuracy of $\gamma_{2,2}$. Then, we have $$\frac{1}{2}\gamma_{2,2}^2-\frac{2}{3}\gamma_{2,2}+\frac{16}{3}\gamma_{2,2}^4 = 0.$$
Solving above equation we get $\gamma_{2,2} \approx 0.4379$.

When $\itto\leq i \leq \ittt -1$, we know $y_{2,i}^{2|2} = 0$, thus $r_{2,i}^{2|2}= r_{2,i}^{2|1} = y_{2,i}^{2|1}$. Similar to Equation \ref{sec:app:temp37}, following equation can be obtained 
\begin{equation}
	\begin{split}
		&r_{2,i}^{2|2}+\frac{1}{i}\sum_{s=i+1}^{n/2}\sum_{m=1}^2r_{m,s}^{2|2} = \frac{2n-i-1}{2n(n-1)},\\
		&r_{1,i}^{2|2}+\frac{1}{i}\left(\sum_{s=i+1}^{n/2}\sum_{m=1}^2r_{m,s}^{2|2}+r_{2,i}^{2|2}\right) = \frac{2n-i-1}{2n(n-1)}.\\
	\end{split}
\end{equation}
Similarly, the following recursion is held in this interval
\begin{equation}
	R_{i, 2}=\left\{
		\begin{aligned}
			&R_{\ittt,2},&& i = \ittt&\\
			&\frac{(i-1)^2}{i^2}R_{i+1,2}+\left(\frac{2n-i-1}{2n(n-1)}\right)\left(2-\frac{1}{i}\right), && \itto\leq i < \ittt.&
		\end{aligned}\right.
\end{equation}
When $n \rightarrow \infty$, solving this recursion we get $R_{i,2}=\frac{i^2}{2n^2}+\frac{2i}{n}-\frac{2i^2}{\gtt n^2}-\frac{i^2}{n^2}\ln(\frac{\gtt n}{i})$. Again, solving $R_{i+1,2}=\frac{i(2n-i-1)}{2n(n-1)}$ we can get the value of $\gto$ accurate enough. That the $\gto$ satisfies $$\frac{1}{2}\gto^2+2\gto-\frac{2}{\gtt}\gto^2-\gto^2\ln\left(\frac{\gtt}{\gto}\right)=\gto\left(1-\frac{\gto}{2}\right).$$ We can get $\gto \approx 0.2398$. 

	The procedure to calculate $\goo$ and $\got$ is the same but more complex and tedious. We simply list the main result here.
{\small 
	\begin{equation*}
		R_{i,1} =\left\{
		\begin{aligned}
			&\frac{14i}{9n}-\frac{112i}{9n}-\frac{64i^4}{3n^4}\ln(\frac{n}{2i}), &\\
			&(\ittt\leq i \leq n/2)&\\
			&\frac{10i}{3n}+\left(1-\ln\left(\frac{\ittt}{i}\right)-\frac{2}{\gtt}\right)\frac{2i^2}{n^2}+\left(\frac{2}{3\gtt^3}-\frac{2}{\gtt^2}+\frac{R_{\ittt,1}}{\gtt^4}\right)\frac{i^4}{n^4},&\\
			&(\itot\leq i < \ittt)&\\
			&\frac{6i}{n}-\left(\frac{6}{\got }+\frac{4}{\gtt}\ln\left(\frac{\got n}{i}\right)+\ln\left(\frac{\got n}{i}\right)\ln\left(\frac{\gtt^2 n}{\got i}\right)-\frac{R_{\itot,1}}{\got^2}\right)\frac{i^2}{n^2},&\\
			&(\itto \leq i < \itot)&\\
			&\frac{2i}{n}-\frac{2i^2}{\gto n^2}-\frac{i^2}{n^2}\ln\left(\frac{\gto n}{i}\right)+R_{\itto,2}\left(1-\frac{i^2}{\gto^2n^2}\right)+\frac{R_{\itto,1}i^2}{\gto^2n^2},&\\
			&(\itoo \leq i < \itto).&\\
		\end{aligned}
		\right.
	\end{equation*}
}
Besides, $\goo,\, \got$ satisfy
{\small
\begin{equation*}
	\begin{split}
		&\frac{10\got}{3}+\left(1-\ln\left(\frac{\gtt}{\got}\right)-\frac{2}{\gtt}\right)2\got^2+\left(\frac{2}{3\gtt^3}-\frac{2}{\gtt^2}+\frac{R_{\ittt,1}}{\gtt^4}\right)\got^4 =\frac{\got^2}{2}+R_{\itot,2},\\
		&2\goo-\frac{2\goo^2}{\gto}-\goo^2\ln\left(\frac{\gto }{\goo}\right)+R_{\itto,2}\left(1-\frac{\goo^2}{\gto^2}\right)+\frac{R_{\itto,1}\goo^2}{\gto^2}=\goo\left(1-\frac{\goo}{2}\right)+R_{\itto, 2}.
	\end{split}
\end{equation*}
}
Finally, we get $\goo \approx 0.1765,\, \got \approx 0.3658$ and $\comp{2}{2}{2} = R_{\iooo,1} \approx 0.372$.\qed
\end{proof}

\vspace{-0.3cm}
\section{Conclusion\label{sec:name:conclusion}}
\vspace{-0.2cm}
In this paper, we deal with a generalization of secretary problem in the parallel setting, the \fswjkq, and provide a deterministic optimal protocol. This protocol can be applied to a series of relevant variants while keeps optimal. In addition, we provide some analytical results for two typical cases: the 1-queue 1-choice $K$-best case and the shared 2-queue 2-choice 2-best case. 

There are several interesting open problems. The first one is making a tighter analysis of the competitive ratio for \fswjkq. For the $1$-queue $1$-choice $K$-best case, we conjecture that the competitive ratio has the form of $1-O(f(K)^K)$ for some negligible function $f$. For the general case, there is no notable result up to now and lots of work remain to be done. Another interesting aspect is 
to know whether the technique in this paper can be used to find deterministic protocol for other variations such as matroid secretary problem, submodular secretary problem, knapsack secretary problem etc.
\vspace{-0.2cm}

\end{document}